\documentclass[reprint, showpacs, superscriptaddress]{revtex4-1}

\usepackage{amsthm, CJK, graphicx, hyperref}

\newtheorem{corollary}{Corollary}
\newtheorem{lemma}{Lemma}
\newtheorem{proposition}{Proposition}
\newtheorem{theorem}{Theorem}

\theoremstyle{definition}
\newtheorem{definition}{Definition}

\theoremstyle{remark}
\newtheorem*{remark}{Remark}

\begin{document}

\begin{CJK*}{UTF8}{}

\title{Quantum circuit complexity of one-dimensional topological phases}

\CJKfamily{gbsn}

\author{Yichen Huang (黄溢辰)}

\email{yichenhuang@berkeley.edu}

\affiliation{Department of Physics, University of California, Berkeley, Berkeley, California 94720, USA}

\author{Xie Chen}

\affiliation{Department of Physics, University of California, Berkeley, Berkeley, California 94720, USA}
\affiliation{Department of Physics and Institute for Quantum Information and Matter, California Institute of Technology, Pasadena, California 91125, USA}

\date{\today}

\begin{abstract}

Topological quantum states cannot be created from product states with local quantum circuits of constant depth and are in this sense more entangled than topologically trivial states, but how entangled are they? Here we quantify the entanglement in one-dimensional topological states by showing that local quantum circuits of linear depth are necessary to generate them from product states. We establish this linear lower bound for both bosonic and fermionic one-dimensional topological phases and use symmetric circuits for phases with symmetry. We also show that the linear lower bound can be saturated by explicitly constructing circuits generating these topological states. The same results hold for local quantum circuits connecting topological states in different phases.

\end{abstract}

\pacs{03.67.Ac, 71.10.Fd, 75.10.Pq, 89.70.Eg}

\maketitle

\end{CJK*}

\section{Introduction}

Many-body entanglement is essential to the existence of topological order in strongly correlated systems. While ground states in topologically trivial phases can take a simple product form, ground states in topological phases are always entangled. Of course, ground states in topologically trivial phases can be entangled, too. It is then natural to ask what is the essential difference between the entanglement patterns that give rise to topologically trivial and nontrivial states.

Besides topological entanglement entropy \cite{KP06, LW06} and the entanglement spectrum \cite{LH08}, which partially capture the topological properties of the system, quantum circuits \cite{NC00} provide a powerful tool for characterizing the entanglement patterns of topological states. Intuitively, one would expect that states with more complicated entanglement patterns require larger circuits to generate from product states. Also, small circuits would suffice to connect ground states in the same phase as their entanglement patterns are similar, while large circuits are necessary to map states from one phase to another.

Indeed, in gapped quantum many-body systems it has been shown that two ground states are in the same topological phase if and only if they can be mapped to each other with a local quantum circuit of constant depth, i.e., a constant (in the system size) number of layers of nonoverlapping local unitaries \cite{CGW10}. States with nontrivial intrinsic topological order are thus said to be long-range entangled in the sense that they cannot be created from product states with circuits of constant depth. Circuits of constant depth can generate symmetry protected topological (SPT) states from product states but only if the symmetry is broken. If only symmetric unitaries are allowed, the circuit depth has to grow with the system size.

Therefore, topological states are in this sense more entangled than topologically trivial states, but how entangled are they? In particular, we ask, what is the quantum circuit complexity of generating topological states from product states, i.e., how does the circuit depth scale with the system size? In two and higher dimensions, it has been shown that circuits of linear (in the diameter of the system) depth are necessary to generate states with topological degeneracy \cite{BHV06}. One might expect that topological states without topological degeneracy are less entangled and can be created with circuits of sublinear depth. However, we show that this is not the case, at least in one dimension (1D).

We demonstrate that, \emph{to generate 1D gapped (symmetry protected) topological states from product states, the depth of the (symmetric) local quantum circuits has to grow linearly with the system size.} The Majorana chain \cite{Kit01} provides an example of a topological state without topological degeneracy, and we show that local fermionic circuits of linear depth are necessary for its creation. For all 1D SPT states, we show that linear depth is required as long as the symmetry is preserved. In particular, we prove that the nonlocal (string) order parameters \cite{HPCS12, PT12} distinguishing different SPT phases remain invariant under symmetric circuits of sublinear depth. Furthermore, we explicitly construct circuits of linear depth that generate 1D topological states. These results suggest the dichotomous picture that ground states of gapped local Hamiltonians are connected by local quantum circuits of either constant or linear depth, depending on whether they are in the same phase or not.

The paper is organized as follows. Section \ref{sec2} reviews the basic notion of gapped quantum phases and how 1D topological phases are classified with local quantum circuits (Appendixes \ref{rig} and \ref{asec2}). Then we study the quantum circuit complexity of prototypical examples of 1D topological phases: the Majorana chain in fermionic systems (Sec. \ref{sec3}) and the Haldane chain with $Z_2\times Z_2$ on-site symmetry in bosonic (spin) systems (Sec. \ref{sec4} and Appendix \ref{full}). We explicitly construct circuits of linear depth that generate these topological states from product states (Propositions \ref{prop1} and \ref{prop3}) and show that linear depth is a lower bound (Propositions \ref{prop2} and \ref{prop4}). For the Majorana chain, the circuit is composed of fermionic local unitaries; for the Haldane chain with symmetry, the circuit is composed of symmetric local unitaries. Appendixes \ref{all_ld} and \ref{all_llb} establish the same results for all 1D topological phases in a similar but more complicated way. Section \ref{sec5} concludes with the implications of our results.

\section{Preliminaries} \label{sec2}

We first review the basic notions of gapped quantum phases and local quantum circuits.

\begin{definition} [gapped quantum phase] \label{def1}
Two gapped local Hamiltonians $H_0$ and $H_1$ are in the same phase if and only if there exists a smooth path of gapped local Hamiltonians $H(t)$ with $0\le t\le1$ such that $H(0)=H_0$ and $H(1)=H_1$. Correspondingly, their ground states are said to be in the same phase.
\end{definition}

Indeed, gapped phases can be defined purely in terms of the ground states, without referring to their Hamiltonians at all. To do this, we need local quantum circuits.

\begin{definition} [local quantum circuit]
A local quantum circuit $C$ of depth $m$ has a layered structure of local unitary quantum gates,
\begin{equation} \label{eq1}
C=\prod_{i_m}C_{i_m}^{(m)}\prod_{i_{m-1}}C_{i_{m-1}}^{(m-1)}\cdots\prod_{i_1}C_{i_1}^{(1)},
\end{equation}
where in each layer $1\le k\le m$ the supports of the local unitaries $C_{i_k}^{(k)}$'s are pairwise nonintersecting.
\end{definition}

\begin{theorem} [informal statement] \label{thm1}
Gapped ground states in the same phase are connected by local quantum circuits of constant depth (up to some reasonably small error).
\end{theorem}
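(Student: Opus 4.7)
The plan is to exploit Definition \ref{def1} to obtain a smooth gapped path $H(s)$, $0\le s\le 1$, from $H_0$ to $H_1$, and then build the circuit from the quasi-adiabatic evolution along this path. The starting point is Hastings' quasi-adiabatic continuation: for a family of local Hamiltonians with uniform spectral gap $\Delta > 0$, one constructs a Hermitian generator $D(s) = \int dt\, F(t)\, e^{iH(s)t}\, \partial_s H(s)\, e^{-iH(s)t}$, where $F$ is a fast-decaying filter function whose Fourier transform is supported away from the gap, such that the time-ordered evolution $U(s) = \mathcal{T}\exp\bigl(-i\int_0^s D(s')\,ds'\bigr)$ exactly maps $|\psi(0)\rangle$ to $|\psi(s)\rangle$. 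Because $\partial_s H(s)$ is a sum of local terms and $F$ decays faster than any polynomial, a Lieb-Robinson bound shows that $D(s)$ can be written as a sum of quasi-local operators whose contribution outside any ball of radius $r$ is suppressed super-polynomially in $r$.

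Next I would convert this continuous evolution into a finite-depth local circuit. First, partition $[0,1]$ into $N$ subintervals of length $1/N$ and Trotterize within each, replacing $\mathcal{T}\exp\bigl(-i\int D\bigr)$ by an ordered product of exponentials of the quasi-local pieces, with Trotter error $O(1/N)$ in operator norm per subinterval. Second, truncate the support of each quasi-local term to a ball of radius $r$, incurring per-term error $\epsilon(r)$ that vanishes super-polynomially. Finally, organize the exponentials of the truncated terms: by a standard coloring argument on the interaction graph, the operators supported on balls of radius $r$ can be partitioned into $O(1)$ color classes of pairwise nonoverlapping supports, and each class contributes one layer of the circuit in the sense of Eq.~\eqref{eq1}. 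Choosing $N$ and $r$ as constants depending only on the target accuracy $\delta$, the Lieb-Robinson velocity, and the gap $\Delta$ along the path (but \emph{not} on the system size), one obtains a local quantum circuit of constant depth approximating $U(1)$ to error $\delta$, and hence sending $|\psi(0)\rangle$ into the $\delta$-ball around $|\psi(1)\rangle$.

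The main obstacle is ensuring that the circuit depth is truly independent of the system size. The exact generator $D(s)$ is not strictly local, and a naive truncation would produce errors that accumulate proportionally to the number of sites, spoiling the claim. Overcoming this requires the super-polynomial decay of $F$ (as in Hastings-Wen), together with a subadditivity estimate showing that the errors from disjoint truncated pieces add coherently only within each layer, so the total error in operator norm remains bounded uniformly in system size. Once quasi-locality of $D(s)$ is established at this quantitative level, the remaining ingredients --- Trotter bounds, packing truncated unitaries into constantly many commuting layers, and unwinding the definition of ``reasonably small error'' as, say, polynomially small in $1/L$ for a choice $r = O(\log L)$ if a sharper bound is desired --- are routine.
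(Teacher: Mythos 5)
Your outline follows the same route as the paper's formal proof (Theorem \ref{Adi} in Appendix \ref{rig}): quasi-adiabatic continuation along the gapped path, quasi-locality of the generator via Lieb--Robinson, Trotterization into a layered $2$-local circuit. But there is a genuine gap in your error accounting, and it sits exactly at the point you flag as ``the main obstacle.'' The claim that a constant-depth circuit can approximate $U(1)$ \emph{in operator norm} (or map $|\psi(0)\rangle$ into a $\delta$-ball around $|\psi(1)\rangle$ in state norm) with error independent of system size is false, and the ``subadditivity estimate'' you invoke to rescue it does not exist. The quasi-adiabatic generator $D(s)$ is a sum of $\Theta(L)$ quasi-local terms; truncating each to radius $r$ costs $\epsilon(r)$ per term, and these operator-norm errors add linearly, giving a total of $\Theta(L)\,\epsilon(r)$. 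The same holds for the Trotter error, whose commutator bound sums over the lattice. For fixed $r$ and fixed Trotter step this diverges with $L$; to beat it you need $r=\Omega(\log L)$, which destroys constant depth. At the level of states the situation is no better: generically the overlap between $|\psi(1)\rangle$ and the image of $|\psi(0)\rangle$ under any constant-depth circuit decays exponentially in $L$ (an orthogonality catastrophe), so ``reasonably small error'' cannot mean small trace distance.

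The resolution --- and the content of the paper's formal statement --- is to weaken the target: the circuit need only reproduce \emph{local expectation values}, i.e., $|\langle\psi_1|P|\psi_1\rangle-\langle\psi_0|C^\dag PC|\psi_0\rangle|\le\epsilon$ for local $P$ with $\|P\|\le1$. With that goal, only the gates inside the causal cone of $P$ matter: the paper applies the Lieb--Robinson bound (Lemma \ref{LR}) to replace the full evolution by one generated only by the $O(1)$ terms within a constant-size window around the support of $P$, and all truncation and Trotter errors are then sums over $O(1)$ terms, uniformly in $L$. Your construction of the circuit itself is fine; you should restate what you are proving so that the error is measured observable-by-observable rather than globally, and then run your truncation and Trotter estimates only inside the causal cone of the given local observable.
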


\begin{remark}
See Theorem \ref{Adi} in Appendix \ref{rig} for the formal statement of Theorem \ref{thm1}.
\end{remark}

Theorem \ref{thm1} was discussed in Ref. \cite{CGW10} using quasiadiabatic continuation \cite{HW05, BHM10} and the Lieb-Robinson bound \cite{LR72, NS06, HK06}. Gapped phases can also be defined in the presence of symmetry.

\begin{definition} [symmetry protected topological (SPT) phase]
In the absence of symmetry breaking, two symmetric gapped local Hamiltonians $H_0$ and $H_1$ are in the same SPT phase if and only if there exists a smooth path of symmetric gapped local Hamiltonians $H(t)$ with $0\le t\le1$ such that $H(0)=H_0$ and $H(1)=H_1$.
\end{definition}

SPT phases can also be defined purely in terms of the symmetric ground states.

\begin{definition} [symmetric local quantum circuit]
A local quantum circuit $C$ (\ref{eq1}) is symmetric if each quantum gate $C_{i_k}^{(k)}$ is symmetric.
\end{definition}

\begin{corollary} [informal statement] \label{cor1}
Symmetric gapped ground states in the same SPT phase are connected by symmetric local quantum circuits of constant depth (up to some reasonably small error).
\end{corollary}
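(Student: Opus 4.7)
The plan is to recycle the proof sketch of Theorem \ref{thm1} and check that every step preserves the symmetry. By hypothesis, there is a smooth path $H(t)$, $0\le t\le 1$, of symmetric gapped local Hamiltonians interpolating between the parent Hamiltonians of $|\psi_0\rangle$ and $|\psi_1\rangle$. Quasiadiabatic continuation produces a Hermitian quasilocal generator $D(t)$ such that the time-ordered evolution $U=\mathcal{T}\exp(-i\int_0^1 D(t)\,dt)$ satisfies $U|\psi_0\rangle=|\psi_1\rangle$. The standard construction defines $D(t)$ as a filter integral of $\partial_t H(t)$ conjugated by the Heisenberg evolution generated by $H(t)$; since $H(t)$ and $\partial_t H(t)$ are symmetric for every $t$, and the Heisenberg evolution under a symmetric Hamiltonian commutes with the symmetry action, $D(t)$ itself is symmetric. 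Therefore $U$ is a symmetric quasilocal unitary connecting $|\psi_0\rangle$ to $|\psi_1\rangle$.

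Next, I would convert $U$ into a constant-depth symmetric local quantum circuit via the standard Lieb-Robinson patching argument invoked in Theorem \ref{thm1}. The Lieb-Robinson bound guarantees that, up to an exponentially small tail in the block size, $D(t)$ can be written as a sum of terms each supported on a block of some constant width $\ell$. Partitioning the chain into disjoint blocks of width $\ell$ yields a generator whose time evolution is a product of commuting block unitaries supported on these blocks; a constant number of staggered partitions then covers all cross-block terms and produces a constant-depth circuit approximating $U|\psi_0\rangle$ to within any prescribed error $\epsilon$. Because the symmetry acts as a tensor product over sites, restricting a symmetric sum of local terms to any spatial region gives a symmetric operator; hence each block generator, and therefore each gate in the resulting circuit, is symmetric.

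The main technical point I expect to require care is the symmetric truncation step: one has to verify that discarding the cross-block tails of $D(t)$ is simultaneously a small-norm perturbation \emph{and} compatible with the on-site symmetry. For on-site symmetries this compatibility is automatic, since spatial truncation and on-site group action commute, so the error analysis reduces to the one used for Theorem \ref{thm1} with no additional loss. The output is the desired symmetric local quantum circuit of constant depth mapping $|\psi_0\rangle$ to $|\psi_1\rangle$ up to a reasonably small error, which is precisely the statement of Corollary \ref{cor1}.
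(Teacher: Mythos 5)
Your proposal is correct and follows essentially the same route as the paper, which proves Corollary \ref{coro} simply by observing that ``a minor modification of the proof of Theorem \ref{Adi}'' (quasiadiabatic continuation, Lieb-Robinson truncation, and Trotterization into a constant-depth circuit) goes through when every step is checked to preserve the symmetry. The only point to state more carefully is your claim that restricting a symmetric sum of local terms to a region is automatically symmetric---individual terms $h^{(j)}$ of a symmetric Hamiltonian need not be symmetric, but one may group-average each term over the on-site symmetry without changing the total Hamiltonian or the supports, after which your argument is airtight.
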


\begin{remark}
See Corollary \ref{coro} in Appendix \ref{rig} for the formal statement of Corollary \ref{cor1}.
\end{remark}

Based on Theorem \ref{thm1} and Corollary \ref{cor1}, 1D gapped phases have been classified \cite{PTBO10, CGW11, CGW11a, TPB11, FK11, SPC11}. It was found that there is no topological phase in 1D bosonic (spin) systems without symmetry. In 1D fermionic systems without extra symmetry (beyond fermion parity which is always preserved), there is one and only one topological phase: the Majorana chain with Majorana edge modes \cite{Kit01}. In 1D systems with (extra) symmetry, there can be SPT phases with degenerate edge states carrying projective representations of the symmetry group. See Appendix \ref{asec2} for the classification of 1D SPT phases.

Since (symmetry protected) topological states cannot be mapped to topologically trivial states (including product states) with (symmetric) local quantum circuits of constant depth, we ask, what circuit depth is necessary to do this mapping? We show that linear depth is necessary by proving the invariance of the nonlocal (string) order parameters \cite{BV13, HPCS12, PT12} distinguishing different (symmetry protected) topological phases under (symmetric) circuits of sublinear depth.

\begin{theorem} \label{thm2}
Suppose $|\psi\rangle$ and $C|\psi\rangle$ are two gapped ground states in 1D systems (with symmetry), where $C$ is a (symmetric) local quantum circuit of sublinear depth. Then $|\psi\rangle$ and $C|\psi\rangle$ are in the same (symmetry protected) topological phase.
\end{theorem}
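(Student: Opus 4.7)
The plan is to prove Theorem~\ref{thm2} by showing that the nonlocal string order parameters classifying 1D (symmetry protected) topological phases are invariant under (symmetric) local quantum circuits of sublinear depth. Since distinct phases are distinguished by taking distinct values of these order parameters, preservation of the order parameter forces $|\psi\rangle$ and $C|\psi\rangle$ to share the same phase label.

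First, I would set up the light-cone bound for quantum circuits. A depth-$m$ circuit with gates of bounded range $r$ transforms an operator $O_X$ supported on a region $X$ into $C^\dagger O_X C$ whose support is contained in the $mr$-neighborhood of $X$. I would then apply this to the appropriate string order parameter: for the Haldane chain, and more generally for any 1D SPT phase, the parameter takes the form $S_{[i,j]} = u_i\, g_{i+1}g_{i+2}\cdots g_{j-1}\, v_j$, where $g_k$ implements the on-site symmetry and $u_i, v_j$ are local endpoint operators; for the Majorana chain one uses a Majorana bilinear threaded through a Jordan--Wigner string. The key point is that this quantity takes a nonzero value in one phase and zero in the other.

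Next, I would decompose $C^\dagger S_{[i,j]} C$. Choosing $|i-j|$ to scale linearly with the system size $N$ while $m=o(N)$, the bulk portion of the string, a product of on-site symmetry operators, is invariant under the circuit because each gate is symmetric and therefore commutes with the global symmetry action restricted to the bulk of its support; only the two endpoint operators are dressed into enlarged but still localized operators $\tilde u_i$ and $\tilde v_j$ whose supports remain disjoint. Taking the expectation value in $|\psi\rangle$, the transformed string order parameter factorizes (by the clustering property guaranteed by the gap of $|\psi\rangle$) into the original bulk string expectation modulated by local endpoint factors. Since $\tilde u_i$ and $\tilde v_j$ are themselves symmetric, they cannot shift the projective representation class at the string endpoint; the analogous statement for the fermionic Majorana case is that the dressed endpoint cannot change the fermion parity of the edge mode. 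Hence the string order parameter in $C|\psi\rangle$ is nonzero precisely when it is nonzero in $|\psi\rangle$, and carries the same topological label.

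The main obstacle I expect is this last step: controlling the dressed endpoints and showing they cannot flip the topological invariant. Proving that symmetric local dressings preserve the projective class requires an algebraic argument in the spirit of the classification of 1D SPT phases, and the fermionic case demands careful bookkeeping of Jordan--Wigner strings and global fermion parity under the circuit. I would handle these by reducing to the concrete invariants established in Propositions~\ref{prop2} and~\ref{prop4}, together with their generalizations in Appendix~\ref{all_llb}, treating the bosonic symmetric and fermionic cases separately but by parallel reasoning and then invoking the classification recalled in Appendix~\ref{asec2} to identify agreement of string order parameters with agreement of phase.
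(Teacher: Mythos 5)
Your overall strategy matches the paper's: Theorem~\ref{thm2} is proved there by showing that the nonlocal order parameters distinguishing phases are invariant under sublinear-depth (symmetric) circuits, using exactly the light-cone decomposition you describe (symmetric gates outside the causal cones of the string endpoints cancel between $C^\dagger$ and $C$, leaving a string operator with dressed endpoints of sublinear, disjoint supports). However, your key step contains an error and a gap. You assert that the dressed endpoints $\tilde u_i,\tilde v_j$ ``are themselves symmetric'' and therefore cannot shift the invariant. They are not symmetric: $Q_l=C_l^\dagger S^y_{N/3}\prod_j e^{i\pi S^y_j}C_l$ transforms \emph{covariantly}, picking up the same nontrivial sign as $S^y$ under $\prod_j e^{i\pi S^z_j}$, and this nontrivial symmetry charge is the entire mechanism: an endpoint that were genuinely invariant would give a string expectation value that is generically nonzero even in the trivial phase, destroying the dichotomy. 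What must be shown is that a charged endpoint, dressed by a symmetric unitary of sublinear support, still obeys the selection rule $\lim_N\langle\psi|Q'|\psi\rangle=0$ in the trivial phase. Your proposed route --- factorizing the expectation value by clustering into a bulk string times local endpoint factors --- does not work, because the string order parameter is irreducibly nonlocal; the paper instead carries out an MPS transfer-operator computation (Appendix~\ref{full}) using the projective action on the virtual indices, Eq.~(\ref{sym}), and the short-range-correlation replacement rules (\ref{SRC1})--(\ref{SRC2}). Deferring this step to ``the concrete invariants established in Propositions~\ref{prop2} and~\ref{prop4}'' is circular, since those propositions \emph{are} the proof of Theorem~\ref{thm2}.

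Two further points. For the Majorana chain the paper does not argue via fermion parity of a dressed edge mode; it applies the Jordan--Wigner transformation, under which the string order parameter becomes an ordinary two-point correlator $\langle\sigma^x_{N/3}\sigma^x_{2N/3}\rangle$ and the fermionic circuit becomes a local spin circuit, and then invokes the Lieb--Robinson-based fact that sublinear-depth circuits cannot create long-range order \cite{BHV06}. This is both cleaner and logically different from your sketch, and it sidesteps the bookkeeping you worry about. Finally, string order parameters of the form $u_i\,g\cdots g\,v_j$ only suffice for Abelian on-site symmetry; for general $G$ the paper must use the Pollmann--Turner nonlocal order parameters measuring gauge-invariant phase factors of the projective class (Appendix~\ref{all_llb}), a case your proposal does not address.
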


\section{Majorana chain} \label{sec3}

In the absence of (extra) symmetry (beyond fermion parity), the Majorana chain with Majorana edge modes \cite{Kit01} is the only 1D topological order. We now study the Majorana chain by considering the fermionic model
\begin{eqnarray} \label{maj}
H&=&\sum_{j=1}^{N-1}(a_j-a_j^\dag)(a_{j+1}+a_{j+1}^\dag)+\mu\sum_{j=1}^N(2a_j^\dag a_j-1)\nonumber\\
&&-(a_N-a_N^\dag)(a_1+a_1^\dag)
\end{eqnarray}
with antiperiodic boundary conditions in the symmetry sector of even fermion parity, where $a_j$ and $a_j^\dag$ are the fermion annihilation and creation operators at the site $j$. The model (\ref{maj}) is in the topologically trivial and nontrivial phases for $\mu>1$ and $0\leq \mu<1$, respectively. We show that two ground states in different phases can be connected by a local fermionic circuit of linear depth and that linear depth is a lower bound.

\begin{figure}
\includegraphics[width=\linewidth]{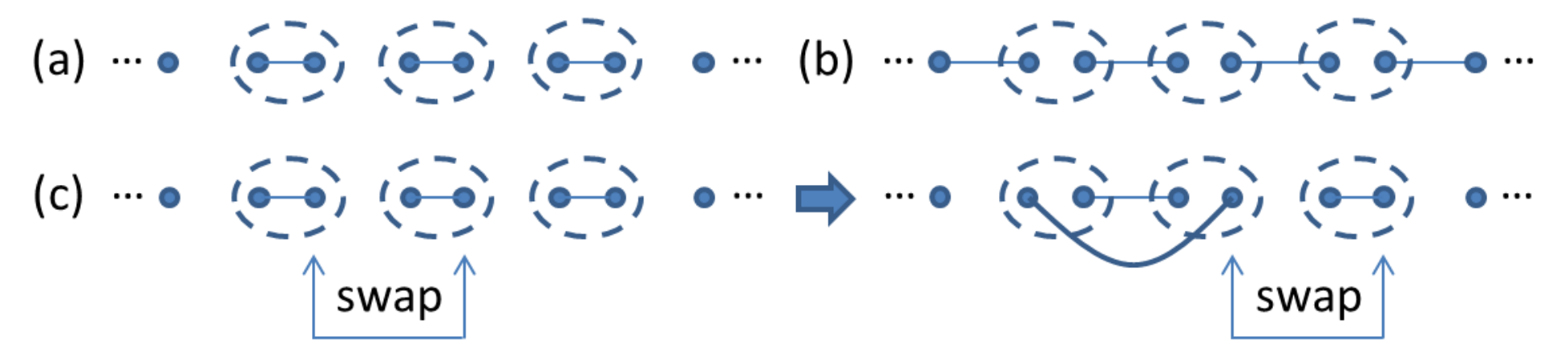}
\caption{(Color online) The renormalization group (RG) fixed-point states \cite{VCL+05, CGW11} in the (a) trivial and (b) nontrivial fermionic (Majorana chain) or SPT (e.g., Haldane chain) phases. For states in fermionic phases, each dot represents a Majorana mode and connected pairs form fermionic modes which are vacant or occupied. For states in SPT phases, each dot carries a projective representation of the symmetry group and connected pairs form symmetric singlets. (c) The states in (a) and (b) can be exactly mapped to each other with a linear-depth $2$-local quantum circuit composed of \textsc{swap} gates.} \label{swap}
\end{figure}

\begin{proposition} \label{prop1}
Suppose $|\psi_0\rangle$ and $|\psi_1\rangle$ are two gapped ground states in the topologically trivial and nontrivial phases in 1D fermionic systems, respectively. Given an arbitrarily small constant $\epsilon$, there exist $|\psi'_0\rangle,|\psi'_1\rangle$ and a local fermionic circuit $C$ of linear depth such that $|\psi'_1\rangle=C|\psi'_0\rangle$ and
\begin{equation}
|\langle\psi_k|P|\psi_k\rangle-\langle\psi'_k|P|\psi'_k\rangle|\le\epsilon~(k=0,1)
\end{equation}
for any local operator $P$ with bounded norm.
\end{proposition}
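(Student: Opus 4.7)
The plan is to sandwich a linear-depth circuit that acts between RG fixed-point states by two constant-depth circuits supplied by Theorem~\ref{thm1}. Let $|\phi_0\rangle$ and $|\phi_1\rangle$ denote the RG fixed-point states of the trivial and Majorana-chain phases depicted in Figures~\ref{swap}(a) and~\ref{swap}(b). Since $|\psi_k\rangle$ and $|\phi_k\rangle$ lie in the same phase for $k=0,1$, Theorem~\ref{thm1} supplies constant-depth local fermionic circuits $U_0,U_1$ such that, setting $|\psi'_k\rangle := U_k|\phi_k\rangle$, the bound $|\langle\psi_k|P|\psi_k\rangle-\langle\psi'_k|P|\psi'_k\rangle|\le\epsilon$ holds for every bounded-norm local $P$. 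The tolerance in Theorem~\ref{thm1} can be tightened to any prescribed $\epsilon$ at the cost of enlarging the constant depth, so this imposes no real restriction.

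Next, I would invoke Figure~\ref{swap}(c), which exhibits a linear-depth $2$-local circuit $S$ of nearest-neighbor SWAP gates that \emph{exactly} maps $|\phi_0\rangle$ to $|\phi_1\rangle$ by shifting the Majorana pairing pattern. Composing the three pieces, the circuit $C := U_1 \, S \, U_0^\dagger$ is a local fermionic circuit of depth $O(N) + O(1) = O(N)$, and a direct check gives $C|\psi'_0\rangle = U_1 S|\phi_0\rangle = U_1|\phi_1\rangle = |\psi'_1\rangle$. Together with the approximation bound from the previous paragraph, this establishes the proposition.

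The main obstacle I anticipate is purely in step two: one must verify that the SWAP circuit of Figure~\ref{swap}(c), including the bond that wraps around the chain, really transforms the trivial pairing pattern $(1,2)(3,4)\cdots(N-1,N)$ into the nontrivial pattern $(2,3)(4,5)\cdots(N,1)$ inside the even-fermion-parity sector with the antiperiodic boundary conditions of Hamiltonian~(\ref{maj}), and that every gate preserves fermion parity so that $C$ qualifies as a \emph{fermionic} circuit. The boundary wrap-around is what forces the depth to grow linearly rather than being achievable in a single brick-wall layer, and it is the place where a careless shift of the dimerization pattern could accidentally produce the trivial fixed point again. Once this bookkeeping is dispatched, the remainder of the argument is just a triangle-inequality composition that uses Theorem~\ref{thm1} as a black box at both ends and the explicit SWAP construction in the middle; no fresh quasiadiabatic or Lieb--Robinson input is needed.
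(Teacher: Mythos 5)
Your proposal is correct and follows essentially the same route as the paper: approximate each $|\psi_k\rangle$ by $|\psi'_k\rangle=C_k|\phi_k\rangle$ using the constant-depth circuits of Theorem~\ref{thm1} (in its fermionic variant from Appendix~\ref{rig}), connect the two RG fixed points exactly by the sequential linear-depth Majorana-swap circuit of Fig.~\ref{swap}(c), and compose as $C=C_1C_\phi C_0^\dag$. The only detail you flag as outstanding---that each swap gate is a parity-preserving fermionic unitary shifting the pairing by one Majorana---is supplied explicitly in the paper by the gates $C^{(j)}=(c_{2j+2}c_{2j+1}+c_{2j+1}c_{2j})/\sqrt2$.
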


\begin{proof}
Define two Majorana operators at each site:
\begin{equation}
c_{2j-1}=a_j+a_j^\dag,~c_{2j}=(a_j-a_j^\dag)/i.
\end{equation}
At $\mu=+\infty$, $H=i\mu\sum_{j=1}^Nc_{2j-1}c_{2j}$ is in the trivial phase, and its ground state $|\phi_0\rangle$ is the tensor product of the vacuum states of the modes $a_j = (c_{2j-1}+ic_{2j})/2$. At $\mu=0$, $H=i\sum_{j=1}^{N-1}c_{2j}c_{2j+1}-ic_{2N}c_1$ is in the nontrivial phase, and its ground state $|\phi_1\rangle$ is the tensor product of the vacuum (or occupied) states of the fermionic modes $b_j = (c_{2j}+ic_{2j+1})/2$. Figures \ref{swap}(a) and \ref{swap}(b) illustrate the structures of $|\phi_0\rangle$ and $|\phi_1\rangle$, which are the RG fixed-point states in the topologically trivial and nontrivial phases, respectively.

As shown in Fig. \ref{swap}(c), $|\phi_0\rangle$ and $|\phi_1\rangle$ can be exactly mapped to each other with a $2$-local fermionic circuit
\begin{equation}
C_\phi=\prod_{j=N-1}^1C^{(j)},~C^{(j)}=\frac{c_{2j+2}c_{2j+1}+c_{2j+1}c_{2j}}{\sqrt2}
\end{equation}
of depth $N-1$, where the local unitary $C^{(j)}$ swaps $c_{2j}$ and $c_{2j+2}$. As $|\psi_k\rangle$ and $|\phi_k\rangle$ are in the same phase, there exists a local fermionic circuit $C_k$ of constant depth (Appendix \ref{rig}) such that $|\langle\psi_k|P|\psi_k\rangle-\langle\psi'_k|P|\psi'_k\rangle|\le\epsilon$ for any local operator $P$ with bounded norm, where $|\psi'_k\rangle=C_k|\phi_k\rangle$. Finally, $C=C_1C_\phi C_0^\dag$ is the circuit of linear depth that connects $|\psi_0\rangle$ and $|\psi_1\rangle$.
\end{proof}

\begin{proposition} \label{prop2}
Suppose $|\psi\rangle$ and $C|\psi\rangle$ are two gapped ground states in 1D fermionic systems, where $C$ is a local fermionic circuit of sublinear depth. Then $|\psi\rangle$ and $C|\psi\rangle$ are in the same topological phase.
\end{proposition}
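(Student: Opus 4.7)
The plan is to use a nonlocal string order parameter to label the two 1D fermionic topological phases, and to show that its asymptotic value is preserved under any local fermionic circuit of sublinear depth; it then follows that $|\psi\rangle$ and $C|\psi\rangle$ must sit in the same phase.

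First I would recall \cite{HPCS12,PT12,BV13} that the trivial and Majorana-chain phases are distinguished by the pair consisting of a bare fermion-parity string and its Majorana-dressed variant,
\begin{equation}
\mathcal{S}_{[a,b]} = \prod_{j=a}^{b}(1-2a_j^\dagger a_j),\qquad
\mathcal{S}'_{[a,b]} = c_{2a-2}\,\mathcal{S}_{[a,b]}\,c_{2b+1}.
\end{equation}
On the RG fixed points illustrated in Fig.~\ref{swap} the asymptotic expectations $(\lim\langle\mathcal{S}\rangle,\lim\langle\mathcal{S}'\rangle)$ take the values $(1,0)$ at $|\phi_0\rangle$ and $(0,\mathrm{nonzero})$ at $|\phi_1\rangle$, and both values are stable within each phase by Theorem~\ref{thm1}.

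Next I would analyze how $C$ transforms the string. Let $d=o(N)$ be the depth and $r$ the maximum gate range. Because every gate in $C$ preserves fermion parity, any gate with support entirely inside $[a+rd,b-rd]$ commutes with $\mathcal{S}_{[a,b]}$. Commuting all such bulk gates past the string, only the gates whose light cones reach the endpoints can act nontrivially, giving
\begin{equation}
C^\dagger\,\mathcal{S}_{[a,b]}\,C \;=\; E_a\cdot\mathcal{S}_{[a',b']}\cdot E_b,
\end{equation}
with $[a',b']=[a+O(d),b-O(d)]$ and $E_a,E_b$ unitary dressings supported in $O(d)$-neighborhoods of the endpoints; an analogous decomposition holds for $\mathcal{S}'_{[a,b]}$.

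Finally I would send $b-a\to\infty$ while keeping $d=o(b-a)$, using that the gapped state $|\psi\rangle$ has exponentially clustering correlations. Since $\mathcal{S}_{[a',b']}$ acts on $|\phi_0\rangle$- or $|\phi_1\rangle$-like states as a product of local bulk pair operators with definite eigenvalues times two dangling Majoranas $c_{2a'-1}c_{2b'}$ at the endpoints, the expectation $\langle\psi|E_a\mathcal{S}_{[a',b']}E_b|\psi\rangle$ factorizes, up to exponentially small corrections, into a product of two endpoint expectations whose values are determined by the phase of $|\psi\rangle$ alone. Applying the same argument to $\mathcal{S}'$ and combining, $(\lim\langle\mathcal{S}\rangle,\lim\langle\mathcal{S}'\rangle)$ on $C|\psi\rangle$ matches that on $|\psi\rangle$, so the two states lie in the same phase. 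The main obstacle is this last step: one must rule out the possibility that the endpoint dressings $E_a,E_b$ convert a bare string into an effective Majorana-dressed one in the asymptotic limit, which is handled by tracking the fermion parity of $E_a$ and $E_b$ separately and combining their $O(d)$-locality with the clustering property of $|\psi\rangle$.
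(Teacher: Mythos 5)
Your overall strategy---pick a complete set of string order parameters for the two fermionic phases, strip away the circuit gates outside the causal cones of the string endpoints, and argue that the resulting endpoint-dressed string has the same asymptotic behaviour---is sound in outline, and it is genuinely different from the paper's proof of Proposition~\ref{prop2}: the paper instead Jordan--Wigner transforms the dressed string (\ref{strop}) into the ordinary two-point function $\langle\sigma_{N/3}^x\sigma_{2N/3}^x\rangle$ of the Ising model (\ref{ising}), notes that parity-even fermionic gates stay local under the transformation, and then invokes the Lieb--Robinson/\cite{BHV06} statement that sublinear-depth circuits cannot create long-range order. Your route is closer in spirit to the paper's treatment of the SPT case (Proposition~\ref{prop4} and Appendix~\ref{full}). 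The causal-cone step is correct: since every gate is parity-even, a gate supported inside $[a,b]$ commutes with $\mathcal S_{[a,b]}$, and the surviving endpoint dressings are conjugations of local parity operators by parity-even local unitaries, hence parity-even (and correspondingly parity-odd for $\mathcal S'$), so the parity grading of the endpoints is indeed preserved.

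The genuine gap is in your last step. You assert that $\langle\psi|E_a\,\mathcal S_{[a',b']}\,E_b|\psi\rangle$ ``factorizes into two endpoint expectations'' because $\mathcal S_{[a',b']}$ acts as a product of local pair operators with definite eigenvalues---but that description is valid only for the RG fixed points $|\phi_0\rangle,|\phi_1\rangle$, whereas $|\psi\rangle$ is an arbitrary gapped ground state. Exponential clustering of \emph{ordinary} correlations does not by itself imply factorization of a \emph{string} correlator: $\langle A_a\,\mathcal S_{[a',b']}\,B_b\rangle$ is not a two-point function of $A_a$ and $B_b$ because of the extensive operator in between, so ``combining $O(d)$-locality with the clustering property'' is not yet an argument. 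This is exactly the hole that the paper fills either with MPS/transfer-matrix machinery (Appendix~\ref{full}, where the string is absorbed into the transfer operator so that condition (\ref{SRC1}) applies) or, in the fermionic case, with the Jordan--Wigner transformation (\ref{JW}): after the transformation the JW tails of the parity-odd endpoint operators cancel against the string, leaving an honest two-point function of local operators that are odd under the unbroken $\prod_j\sigma_j^z$ symmetry, whence clustering forces it to vanish in the trivial phase. To complete your proof you need one of these mechanisms (or a fermionic-MPS analogue) to turn the dressed-string expectation into something to which clustering actually applies; as written, the selection rule that parity-odd-dressed strings vanish on a general trivial-phase ground state is asserted rather than proved.
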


\begin{proof}
The string order parameter
\begin{equation} \label{strop}
\lim_{N\rightarrow+\infty}\left\langle\left(a_\frac{N}{3}^\dag+a_\frac{N}{3}\right)\prod_{j=\frac{N}{3}}^{\frac{2N}{3}-1}e^{i\pi a_j^\dag a_j}\left(a_\frac{2N}{3}^\dag+a_\frac{2N}{3}\right)\right\rangle
\end{equation}
is zero in the topologically trivial phase and nonzero in the topologically nontrivial phase \cite{BV13}. We show that its value cannot change between these two cases under local fermionic circuits of sublinear depth.

This is easiest to see by applying the Jordan-Wigner transformation
\begin{equation} \label{JW}
a_k=\sigma_k^-\prod_{j=1}^{k-1}(-\sigma_j^z),~a_k^\dag=\sigma_k^+\prod_{j=1}^{k-1}(-\sigma_j^z),
\end{equation}
where $\sigma_k^-$ and $\sigma_k^+$ are the spin-$1/2$ lowering and raising operators at the site $k$. The fermionic model (\ref{maj}) is mapped to the transverse field Ising model with periodic boundary conditions,
\begin{equation} \label{ising}
H=-\sum_{j=1}^{N-1}\sigma_j^x\sigma_{j+1}^x-\sigma_N^x\sigma_1^x+\mu\sum_{j=1}^N\sigma_j^z,
\end{equation}
and the string order parameter (\ref{strop}) is mapped to $\lim_{N\rightarrow+\infty}\langle\psi_s|\sigma_{N/3}^x\sigma_{2N/3}^x|\psi_s\rangle$, where $|\psi_s\rangle$ is the spin ground state. The spin model (\ref{ising}) is in the disordered phase for $\mu>1$ with vanishing correlations at large distances, e.g., $\lim_{N\rightarrow+\infty}\langle\psi_s|\sigma_{N/3}^x\sigma_{2N/3}^x|\psi_s\rangle=0$, and it is in the ordered phase for $0\le\mu<1$ with long-range correlations: $\lim_{N\rightarrow+\infty}\langle\psi_s|\sigma_{N/3}^x\sigma_{2N/3}^x|\psi_s\rangle>0$. As any local unitary in 1D fermionic systems remains local after the nonlocal Jordan-Wigner transformation (\ref{JW}) [in the case where the local unitary in 1D fermionic systems crosses the boundary, there is a trivial factor $\prod_{j=1}^N(-\sigma_j^z)=1$ as the fermion parity is even], a local fermionic circuit $C$ of sublinear depth is mapped to a local spin circuit $C_s$ of sublinear depth. The Lieb-Robinson bound states that correlations can only propagate at a finite speed in quantum many-body systems with local interactions \cite{LR72, NS06, HK06}. As a consequence, local quantum circuits of sublinear depth cannot generate long-range order \cite{BHV06}, i.e., $\lim_{N\rightarrow+\infty}\langle\psi_s|C_s^\dag\sigma_{N/3}^x\sigma_{2N/3}^xC_s|\psi_s\rangle=0$ for any state $|\psi_s\rangle$ with vanishing correlations at large distances. Therefore, the string order parameter (\ref{strop}) is either both zero or both nonzero for the fermionic states $|\psi\rangle$ and $C|\psi\rangle$.
\end{proof}

\section{Haldane chain} \label{sec4}

We switch to 1D spin systems. In the absence of symmetry, all 1D gapped spin systems are in the same phase. In the presence of symmetry, however, there can be SPT phases with degenerate edge states carrying projective representations of the symmetry group \cite{PTBO10, CGW11, CGW11a, SPC11}. See Appendix \ref{asec2} for the classification of 1D SPT phases, which includes a brief review of projective representations (Appendix \ref{proj_rep}). SPT states are short-range entangled in the sense that they can be created from product states with local quantum circuits of constant depth by breaking the symmetry. If the symmetry is preserved, we show that two ground states in different SPT phases can be connected by a local quantum circuit of linear depth and that linear depth is a lower bound.

We now study the Haldane chain with $Z_2\times Z_2$ on-site symmetry as a prototypical example, where we use periodic boundary conditions so that the ground state is unique and symmetric. The proof for general 1D SPT phases is similar but more complicated (Appendixes \ref{all_ld} and \ref{all_llb}). With $Z_2 \times Z_2$ symmetry, there are two phases \cite{PTBO10, PBTO12}: the trivial phase and the Haldane (nontrivial SPT) phase \cite{H83, H83a, AKLT87, AKLT88}.

\begin{proposition} \label{prop3}
Suppose $|\psi_0\rangle$ and $|\psi_1\rangle$ are two $Z_2\times Z_2$ symmetric gapped ground states in the trivial and the Haldane phases, respectively. Given an arbitrarily small constant $\epsilon$, there exist $|\psi'_0\rangle,|\psi'_1\rangle$ and a symmetric local quantum circuit $C$ of linear depth such that $|\psi'_1\rangle=C|\psi'_0\rangle$ and
\begin{equation}
|\langle\psi_k|P|\psi_k\rangle-\langle\psi'_k|P|\psi'_k\rangle|\le\epsilon~(k=0,1)
\end{equation}
for any local operator $P$ with bounded norm.
\end{proposition}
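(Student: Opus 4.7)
The plan is to mirror the structure of the proof of Proposition \ref{prop1}. First I would identify renormalization-group fixed-point states in each phase: a symmetric product state $|\phi_0\rangle$ in the trivial phase, where each site carries a linear representation of $Z_2\times Z_2$, and a valence-bond-solid-type state $|\phi_1\rangle$ in the Haldane phase, constructed as in Fig.\ \ref{swap}(b) by splitting every site into two virtual spin-$1/2$ degrees of freedom that carry the nontrivial projective representation of $Z_2\times Z_2$, and forming the unique $Z_2\times Z_2$-symmetric singlet between the right virtual spin of each site and the left virtual spin of its neighbor. These are the standard RG fixed points of the two $Z_2\times Z_2$ SPT phases.

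Second, I would show that $|\phi_0\rangle$ and $|\phi_1\rangle$ are exactly connected by a $2$-local symmetric quantum circuit $C_\phi$ of depth $N-1$, built from \textsc{swap} gates in the shift pattern of Fig.\ \ref{swap}(c). The crucial observation is that, because the $Z_2\times Z_2$ action is on-site and identical on every site, the \textsc{swap} between any two neighboring sites automatically commutes with the total symmetry and is therefore a symmetric local unitary. This is what lets the same geometric picture as for the Majorana chain go through without modification.

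Third, I would invoke Corollary \ref{cor1}: since $|\psi_k\rangle$ and $|\phi_k\rangle$ lie in the same SPT phase, there exists a symmetric local quantum circuit $C_k$ of constant depth such that $|\psi'_k\rangle=C_k|\phi_k\rangle$ satisfies the required approximation $|\langle\psi_k|P|\psi_k\rangle-\langle\psi'_k|P|\psi'_k\rangle|\le\epsilon$ for every local operator $P$ with bounded norm. Setting $C=C_1 C_\phi C_0^\dag$ then yields a symmetric local quantum circuit of linear depth with $|\psi'_1\rangle=C|\psi'_0\rangle$, finishing the proof.

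The one genuinely delicate point I expect is ensuring that the local Hilbert spaces of $|\psi_0\rangle$ and $|\psi_1\rangle$ can be reconciled with the virtual-spin structure used for $|\phi_1\rangle$: in general one has to tensor in a symmetric product of ancillas so that \textsc{swap} is well defined as a truly $2$-local symmetric gate on a common site. This enlargement costs only constant depth and can be absorbed into $C_0$ and $C_1$, but handling it cleanly for arbitrary projective representations of arbitrary finite symmetry groups is almost certainly why the authors defer the general case to Appendix \ref{all_ld}; for $Z_2\times Z_2$ and the Haldane chain the construction via Fig.\ \ref{swap} is direct.
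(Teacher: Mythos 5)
Your proposal is correct and follows essentially the same route as the paper: the same RG fixed-point states, the same sequential \textsc{swap} circuit of depth $N-1$ (whose gates are symmetric because the on-site action is identical on the swapped degrees of freedom), and the same use of Corollary \ref{cor1} to attach constant-depth symmetric circuits $C_0,C_1$, giving $C=C_1C_\phi C_0^\dag$. Your closing remark about reconciling the local Hilbert spaces via symmetric ancillas absorbed into $C_0,C_1$ is a reasonable elaboration of a point the paper leaves implicit.
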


\begin{proof}
The proof proceeds analogously to that of Proposition \ref{prop1}. Figures \ref{swap}(a) and \ref{swap}(b) illustrate the structures of the RG fixed-point states $|\phi_0\rangle$ and $|\phi_1\rangle$ in the trivial and the Haldane phases, respectively, where each dot now represents a spin-$1/2$ degree of freedom transforming projectively under $\pi$ rotations about the $x,y,z$ axes. It is apparent that the edge state of $|\phi_1\rangle$ in the Haldane phase is twofold degenerate and transforms projectively while that of $|\phi_0\rangle$ in the trivial phase is trivial.

As shown in Fig. \ref{swap}(c), $|\phi_0\rangle$ and $|\phi_1\rangle$ can be exactly mapped to each other by applying $(N-1)$ $2$-local \textsc{swap} gates sequentially. These \textsc{swap} gates rearrange the singlets, are $Z_2\times Z_2$ symmetric and form a symmetric 2-local quantum circuit $C_\phi$ of depth $N-1$. As $|\psi_k\rangle$ and $|\phi_k\rangle$ are in the same SPT phase, there exists a symmetric local quantum circuit $C_k$ of constant depth (Appendix \ref{rig}) such that $|\langle\psi_k|P|\psi_k\rangle-\langle\psi'_k|P|\psi'_k\rangle|\le\epsilon$ for any local operator $P$ with bounded norm, where $|\psi'_k\rangle=C_k|\phi_k\rangle$. Finally, $C=C_1C_\phi C_0^\dag$ is the symmetric circuit of linear depth that connects $|\psi_0\rangle$ and $|\psi_1\rangle$.
\end{proof}

\begin{figure}
\includegraphics[width=\linewidth]{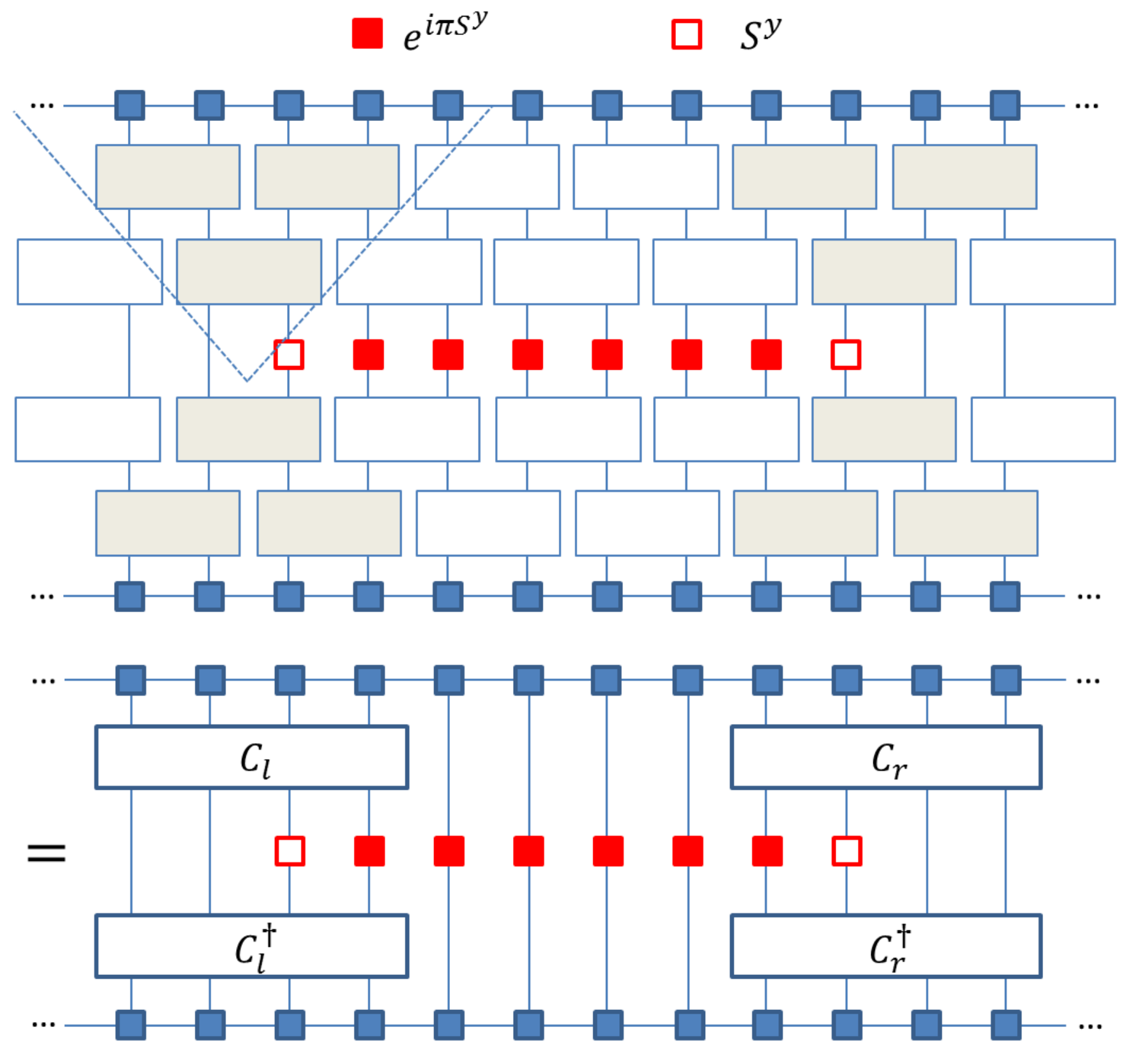}
\caption{(Color online) The expectation value $\langle\psi|C^\dag QC|\psi\rangle$. The horizontal lines attached with small blue squares represent $\langle\psi|$ (bra) or $|\psi\rangle$ (ket), and the short rectangles are the $2$-local unitaries in $C$. The (white) unitaries outside the causal cones (dotted lines) of $S^y$ (small open red squares) can be removed, as they are symmetric. Then we merge the (gray) symmetric local quantum gates inside each casual cone into one symmetric quantum gate (long rectangle) of sublinear support.} \label{sopm}
\end{figure}

\begin{proposition} \label{prop4}
Suppose $|\psi\rangle$ and $C|\psi\rangle$ are two symmetric gapped ground states in 1D spin systems with $Z_2\times Z_2$ on-site symmetry represented by $\left\{1,e^{i\pi S^x},e^{i\pi S^y},e^{i\pi S^z}\right\}$, where $C$ is a symmetric local quantum circuit of sublinear depth. Then $|\psi\rangle$ and $C|\psi\rangle$ are in the same SPT phase.
\end{proposition}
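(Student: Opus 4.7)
The plan is to follow the strategy of Prop.~\ref{prop2}, but using the $Z_2\times Z_2$ string order parameter
\[
\mathcal{O}^y = \lim_{N\to\infty}\langle\psi|Q|\psi\rangle,\qquad Q = S^y_{N/3}\prod_{k=N/3+1}^{2N/3-1} e^{i\pi S^y_k}\,S^y_{2N/3},
\]
which is zero in the trivial phase and nonzero in the Haldane phase \cite{PT12}. Since these are the only two possibilities for a $Z_2\times Z_2$-symmetric gapped ground state in 1D, it suffices to show that $\langle\psi|C^\dagger Q C|\psi\rangle = \langle C\psi|Q|C\psi\rangle$ has the same zero/nonzero status as $\langle\psi|Q|\psi\rangle$.

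The heart of the argument is the cancellation illustrated in Fig.~\ref{sopm}. Let $d = o(N)$ be the depth of $C$. Any gate of $C$ whose support $A$ avoids the causal cones of $S^y_{N/3}$ and $S^y_{2N/3}$ can be eliminated from $C^\dagger Q C$: if $A$ is disjoint from $[N/3,2N/3]$ the gate trivially commutes with $Q$; if $A$ lies in the bulk of the string, then because the gate is symmetric under the global $Z_2\times Z_2$ and the symmetry factorizes over sites, it commutes with the restriction $\prod_{k\in A} e^{i\pi S^y_k}$ of the string to $A$, hence with $Q$. The surviving gates all lie within two causal cones, each of width $O(d)$. Because $d=o(N)$ these cones are disjoint, so the surviving gates inside each cone merge into a single symmetric unitary $V_L$ or $V_R$ of sublinear support around $N/3$ or $2N/3$. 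Commuting $V_L,V_R$ past the undisturbed middle portion of the string using disjointness of supports yields
\[
C^\dagger Q C = \tilde Q_L \,\Sigma_{\mathrm{mid}}\, \tilde Q_R,
\]
where $\tilde Q_L = V_L^\dagger(S^y_{N/3}\,\Sigma_L)\,V_L$ and $\tilde Q_R = V_R^\dagger(\Sigma_R\, S^y_{2N/3})\,V_R$ are dressed half-string endpoints of sublinear support, and $\Sigma_{\mathrm{mid}}$ is an intact symmetry string of length $\Theta(N)$.

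The final step is to show that this dressed string order parameter still vanishes in the trivial phase. For a gapped ground state in the trivial phase, the MPS boundary matrices carry only a linear (nonprojective) representation of $Z_2\times Z_2$, and the transfer-matrix evaluation of any string order parameter of the above form \cite{PT12} factorizes into left and right endpoint contributions that vanish identically, independently of the particular local endpoint operators, as long as the two endpoint regions are separated by a distance much larger than the correlation length. Since the correlation length is $O(1)$ for any gapped 1D ground state while the separation between $\tilde Q_L$ and $\tilde Q_R$ is $\Theta(N)$, this applies uniformly in $N$. Hence $|\psi\rangle$ trivial forces $C|\psi\rangle$ trivial, and the converse direction follows by applying the same argument to $C^{-1}$ acting on $C|\psi\rangle$.

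The main obstacle is this final step: ensuring that the dressed string order parameter still distinguishes the two phases even though its endpoint operators have support growing as $O(d)$ rather than remaining strictly local. The key input is the exponential decay of correlations in any gapped 1D ground state, which keeps the MPS factorization valid uniformly in $N$ as long as the two dressed endpoints are separated by a distance much larger than the correlation length --- true here because the depth is sublinear while the endpoints sit at $N/3$ and $2N/3$.
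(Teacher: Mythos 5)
Your reduction of $C^\dagger Q C$ to a dressed string operator $\tilde Q_L\,\Sigma_{\mathrm{mid}}\,\tilde Q_R$ via causal cones is exactly the paper's argument, and it is correct. The gap is in your final step. You claim that in the trivial phase the transfer-matrix evaluation of a string order parameter ``factorizes into left and right endpoint contributions that vanish identically, independently of the particular local endpoint operators.'' That is false: with a trivial endpoint (e.g., the identity, or any operator even under the whole symmetry group), the bare string $\langle\prod_k e^{i\pi S^y_k}\rangle$ is generically \emph{nonzero} in the trivial phase and zero in the Haldane phase. The vanishing of $\langle Q\rangle$ in the trivial phase is not a consequence of short-range correlations alone; it is a selection rule (Pollmann--Turner) that hinges on the endpoint operator $S^y$ being \emph{odd} under the commuting symmetry element $e^{i\pi S^z}$, while the virtual representation in the trivial phase satisfies $V(z)V(y)V(z)^{-1}V(y)^{-1}=+1$; the mismatch of signs forces the expectation value to zero. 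In the Haldane phase the commutator phase is $-1$, the signs match, and the same quantity is nonzero.

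Consequently, the one thing you actually have to check about $\tilde Q_L=V_L^\dagger(S^y_{N/3}\Sigma_L)V_L$ is that it still carries odd charge under $\prod_j e^{i\pi S^z_j}$ --- which holds precisely because $V_L$ (the merged cone unitary) is itself $Z_2\times Z_2$ symmetric and $S^y\Sigma_L$ is odd. This is the step the paper isolates as $\prod_je^{-i\pi S_j^z}Q_l\prod_je^{i\pi S_j^z}=-Q_l$ and then feeds into the MPS/selection-rule computation of Appendix~\ref{full}. Your version, as written, never uses the symmetry of the cone unitaries beyond enabling the cancellation of the white gates, so it would apply verbatim to a circuit symmetric only under $e^{i\pi S^y}$ --- and such constant-depth circuits \emph{can} connect the two phases, since a single $Z_2$ protects nothing. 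An argument that cannot distinguish these two situations cannot be correct. The fix is to replace your ``independently of the endpoint operators'' claim with the charge-tracking argument and the resulting selection rule.
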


\begin{proof}
We make use of the string (nonlocal) order parameters \cite{HPCS12, PT12} distinguishing different SPT phases. For the Haldane chain, the string order operator is \cite{dNR89, KT92a, KT92b}
\begin{equation} \label{stroo}
Q=S_{N/3}^y\prod_{j=N/3+1}^{2N/3-1}e^{i\pi S_j^y}S_{2N/3}^y,
\end{equation}
where $\vec{S}_j=(S_j^x,S_j^y,S_j^z)$ is the spin-$1$ operator at the site $j$. The string order parameter $\lim_{N\rightarrow+\infty}\langle Q\rangle$ is zero in the trivial phase and nonzero in the Haldane phase. We show that its value cannot change between these two cases under $Z_2\times Z_2$ symmetric local quantum circuits of sublinear depth.

Assume without loss of generality that $C$ is a symmetric $2$-local quantum circuit of depth $m\le N/9$. Figure \ref{sopm} shows the expectation value $\langle \psi|C^{\dag}QC |\psi\rangle$. As each gate in the circuit $C$ is unitary and symmetric, the white gates cancel out. Then we merge the gray gates inside the causal cones (dotted lines) of the left and right end operators $S^y$ (small open red squares) into $C_l$ and $C_r$, respectively. As $C$ is of sublinear depth, $C_l$ and $C_r$ are nonoverlapping. Hence $Q'=C^{\dag}QC$ remains a string (order) operator. Specifically, the string becomes shorter but is still of the form $\prod_je^{i\pi S_j^y}$. The left and right end operators are changed to
\begin{eqnarray}
&&Q_l=C_l^\dag S_{N/3}^y\prod_{j=N/3+1}^{N/3+m}e^{i\pi S_j^y}C_l,\label{end1}\\
&&Q_r=C_r^\dag\prod_{j=2N/3-m}^{2N/3-1}e^{i\pi S_j^y}S_{2N/3}^yC_r,\label{end2}
\end{eqnarray}
respectively. As $C_l$ is symmetric, $Q_l$ transforms in the same way under the symmetry as $S^y$, e.g.,
\begin{equation}
\prod_je^{-i\pi S_j^z}Q_l\prod_je^{i\pi S_j^z}=-Q_l.
\end{equation}
Appendix \ref{full} shows that $\lim_{N\rightarrow+\infty}\langle\psi|Q'|\psi\rangle=0$ if and only if $\lim_{N\rightarrow+\infty}\langle\psi|Q|\psi\rangle=0$. Therefore, the string order operator (\ref{stroo}) has either both zero or both nonzero expectation values for $|\psi\rangle$ and $C|\psi\rangle$.
\end{proof}

Nonlocal (string) order parameters have been systematically constructed for general 1D SPT phases \cite{HPCS12, PT12}. Appendixes \ref{all_ld} and \ref{all_llb} extend our proof to all these cases accordingly.

\section{Conclusion} \label{sec5}

We have quantified the many-body entanglement in 1D (symmetry protected) topological states with (symmetric) local quantum circuits. In particular, we have shown that circuits of linear depth are necessary to generate 1D topological states from product states. We have also explicitly constructed circuits of linear depth that generate 1D topological states. These results are useful not only conceptually but also operationally as a guide to preparing topological states in experiments.

Although our proof is in 1D, we expect similar results in two and higher dimensions. Indeed, it has been shown that local quantum circuits of linear (in the diameter of the system) depth are necessary to generate states with topological degeneracy \cite{BHV06}. We conjecture that this is also true for topological states without topological degeneracy, e.g., the integer quantum Hall states, the $p$-wave superconductors, and the $E_8$ states. See Ref. \cite{Haa14} for recent progress in this direction.

More generally, we can ask, what is the quantum circuit complexity of generating ground states in gapless phases or at phase transitions? We expect that quantum circuits also characterize the entanglement patterns that give rise to the physical properties in gapless or critical systems.

\begin{acknowledgments}

We would like to thank Isaac H. Kim, Spyridon Michalakis, Joel E. Moore, John Preskill, Frank Pollmann, and Ashvin Vishwanath for helpful discussions. In particular, I.H.K. pointed out that a variant of Proposition \ref{prop2} can be proved using his entropic topological invariant \cite{Kim14}. This work was supported by the Miller Institute for Basic Research in Science at the University of California, Berkeley, the Caltech Institute for Quantum Information and Matter, the Walter Burke Institute for Theoretical Physics (X.C.), and DARPA OLE (Y.H.).

\end{acknowledgments}

\appendix

\section{STATES IN THE SAME PHASE---CONSTANT DEPTH} \label{rig}

We give a rigorous formulation of the statement \cite{CGW10} that gapped ground states in the same phase are connected by local quantum circuits of constant depth.

\begin{lemma} \label{Sch}
Suppose $H_0(t)$ and $H_1(t)$ are two time-dependent Hamiltonians with $\|H_0(t)-H_1(t)\|\le\delta$. Then the (unitary) time-evolution operators
\begin{equation}
U_k(t)=\mathcal Te^{-i\int_0^tH_k(\tau)d\tau}~(k=0,1)
\end{equation}
satisfy $\|U_0(t)-U_1(t)\|\le\delta t$, where $\mathcal T$ is the time-ordering operator.
\end{lemma}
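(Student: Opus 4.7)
The plan is to reduce the bound on $\|U_0(t)-U_1(t)\|$ to an integral estimate by working with the unitary difference $W(t) = U_1^\dag(t) U_0(t)$. Since $U_k$ is defined as a time-ordered exponential, it satisfies the Schr\"odinger-like equation $i\partial_\tau U_k(\tau) = H_k(\tau) U_k(\tau)$ with $U_k(0) = I$, and taking the adjoint of this equation gives $\partial_\tau U_1^\dag(\tau) = i U_1^\dag(\tau) H_1(\tau)$. First I would record these two equations carefully, as the sign conventions are the only real place to go wrong.

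Next I would differentiate $W(\tau)$ using the product rule and the two equations above. The $H_0$ and $H_1$ pieces combine into a single commutator-like term, yielding the clean identity
\begin{equation}
\partial_\tau W(\tau) = -i\, U_1^\dag(\tau)\bigl(H_0(\tau) - H_1(\tau)\bigr)U_0(\tau).
\end{equation}
Since $W(0) = I$, integrating from $0$ to $t$ gives an exact expression for $U_1^\dag(t) U_0(t) - I$, and multiplying on the left by $U_1(t)$ turns this into an expression for $U_0(t) - U_1(t)$ itself:
\begin{equation}
U_0(t) - U_1(t) = -i\, U_1(t)\int_0^t U_1^\dag(\tau)\bigl(H_0(\tau) - H_1(\tau)\bigr)U_0(\tau)\, d\tau.
\end{equation}

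Finally, I would apply the triangle inequality for integrals together with submultiplicativity and unitary invariance of the operator norm. Each factor $U_1(t)$, $U_1^\dag(\tau)$, $U_0(\tau)$ is unitary and drops out, leaving $\|U_0(t)-U_1(t)\| \le \int_0^t \|H_0(\tau)-H_1(\tau)\|\, d\tau \le \delta t$ by the hypothesis. There is no real obstacle here; the only mild subtlety is to keep the adjoint signs straight when differentiating $U_1^\dag(\tau)$, and to remember that the operator norm is bi-unitarily invariant so that the sandwiching unitaries in the integrand contribute nothing to the bound.
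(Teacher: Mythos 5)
Your proof is correct and follows essentially the same route as the paper: the paper works with the interaction-picture unitary $U_I=U_0^\dag U_1$ and bounds $\|U_I'(t)\|\le\delta$ before integrating, which is the same mechanism as your direct differentiation of $W=U_1^\dag U_0$ followed by the integral triangle inequality. The sign bookkeeping in your derivative of $W$ is right (using Hermiticity of $H_1$ for the adjoint equation), so there is nothing to add.
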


\begin{proof}
Let
\begin{equation}
U_I(t)=\mathcal Te^{-i\int_0^tU_0^\dag(\tau)[H_1(\tau)-H_0(\tau)]U_0(\tau)d\tau}
\end{equation}
be the (unitary) time-evolution operator in the interaction picture. Indeed, it is straightforward to verify $U_1(t)=U_0(t)U_I(t)$ by differentiating with respect to $t$. Then,
\begin{eqnarray}
&&\|U'_I(t)\|=\|U_0^\dag(t)(H_1(t)-H_0(t))U_0(t)U_I(t)\|\nonumber\\
&&=\|H_1(t)-H_0(t)\|\le\delta\nonumber\\
&&\Rightarrow\|U_0(t)-U_1(t)\|=\|U_0(t)U_I(0)-U_0(t)U_I(t)\|\nonumber\\
&&=\|U_I(0)-U_I(t)\|\le\delta t.
\end{eqnarray}
\end{proof}

\begin{lemma} \label{LR}
Suppose $H(t)=\sum_{j=1}^{N-1}h^{(j)}(t)$ is a time-dependent 1D $2$-local Hamiltonian with open boundary conditions, where $h^{(j)}$ acts on the spins $j$ and $j+1$ (nearest-neighbor interaction). Define $H_*(t)=\sum_{j=1}^{l-1}h^{(j)}(t)$ for $l\le N$. Let $U(t)$ and $U_*(t)$ be the (unitary) time-evolution operators for $H(t)$ and $H_*(t)$, respectively. Then,
\begin{equation}
\|U^\dag(1)PU(1)-U_*^\dag(1)PU_*(1)\|=e^{-\Omega(l)}
\end{equation}
for any operator $P$ acting on the first spin with $\left\|P\right\|\le1$.
\end{lemma}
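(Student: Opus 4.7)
The plan is to work in the Heisenberg picture and isolate the effect of the truncation $H(t)\to H_*(t)$ as a source term that the Lieb-Robinson bound can control. Set $A(t):=U^\dag(t)PU(t)$ and $A_*(t):=U_*^\dag(t)PU_*(t)$, and write $V(t):=H(t)-H_*(t)=\sum_{j=l}^{N-1}h^{(j)}(t)$, which is supported on sites $\{l,l+1,\ldots,N\}$. A direct differentiation yields
\begin{equation}
\frac{d}{dt}\bigl[A(t)-A_*(t)\bigr]=i\bigl[H(t),\,A(t)-A_*(t)\bigr]+i\bigl[V(t),\,A_*(t)\bigr].
\end{equation}
The first term on the right is an infinitesimal unitary conjugation and preserves the operator norm of $A(t)-A_*(t)$, so by Duhamel's principle
\begin{equation}
\bigl\|A(1)-A_*(1)\bigr\|\le\int_0^1\bigl\|[V(t),\,A_*(t)]\bigr\|\,dt.
\end{equation}

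Next, I would bound each commutator $\|[h^{(j)}(t),A_*(t)]\|$ using the Lieb-Robinson bound applied to the 2-local Hamiltonian $H_*(t)$. Since $P$ is supported on site $1$ and $h^{(j)}(t)$ is supported on sites $\{j,j+1\}$ at graph distance at least $j-1\ge l-1$, there exist constants $C,\mu,v>0$ (depending only on a uniform bound on $\|h^{(j)}(t)\|$) such that
\begin{equation}
\bigl\|[h^{(j)}(t),\,A_*(t)]\bigr\|\le C\,\|h^{(j)}\|\,\|P\|\,e^{-\mu(j-1-vt)}
\end{equation}
for all $t\in[0,1]$. Summing the geometric series over $j=l,\ldots,N-1$ and integrating over $t\in[0,1]$ gives $\|A(1)-A_*(1)\|\le C'e^{-\mu(l-v-1)}$, which is the claimed $e^{-\Omega(l)}$ decay.

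The content beyond the elementary identities is essentially a single invocation of the Lieb-Robinson bound, which the paper has already cited; the two ingredients (the Duhamel identity and the commutator form of the Lieb-Robinson bound) are standard. The main obstacle, such as it is, is just checking that the time-dependent Lieb-Robinson bound applies with constants independent of $N$ under the implicit hypothesis that $\sup_t\|h^{(j)}(t)\|$ is uniformly bounded in $j$ and $N$; this uniform bound is what turns $v$ into a true finite propagation velocity and thereby produces the exponential decay in $l$. No appeal to Lemma \ref{Sch} is needed, although an alternative route via $\|U(t)-U_*(t)\|$ would fail precisely because $\|V(t)\|$ itself need not be small.
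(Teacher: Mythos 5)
The paper does not actually prove Lemma \ref{LR}: it records the statement as a variant of the Lieb-Robinson bound and defers to Ref.~[24] of Ref.~\cite{Osb06} for a direct proof. Your Duhamel-plus-Lieb-Robinson argument is the standard way to establish exactly this kind of statement, and it is correct in substance: every truncated term $h^{(j)}$ with $j\ge l$ is supported at distance at least $l-1$ from the support of $P$; the commutator form of the Lieb-Robinson bound for the time-dependent $2$-local Hamiltonian $H_*$ (with constants controlled by $\sup_{t,j}\|h^{(j)}(t)\|$, which you rightly identify as the implicit hypothesis) gives a factor $e^{-\mu(j-1-v)}$ for each such term; and the geometric sum over $j\ge l$ together with the time integral yields $e^{-\Omega(l)}$. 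You are also right that Lemma \ref{Sch} is of no use here, since $\|H(t)-H_*(t)\|$ is extensive.

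One technical repair is needed. For time-dependent Hamiltonians the Heisenberg equation reads $\frac{d}{dt}A(t)=i\,[U^\dag(t)H(t)U(t),\,A(t)]$, not $i[H(t),A(t)]$, and the difference of the two conjugated Hamiltonians is not $V(t)$; so the displayed differential identity in your proposal is not valid as written. The clean route to the integral bound you want is the interpolation
\begin{equation}
\Phi(t)=U^\dag(t)\,X(t)\,U(t),\qquad X(t)=U_*(t)U_*^\dag(1)\,P\,U_*(1)U_*^\dag(t),
\end{equation}
which satisfies $\Phi(0)=U_*^\dag(1)PU_*(1)$, $\Phi(1)=U^\dag(1)PU(1)$, and $\Phi'(t)=i\,U^\dag(t)\,[V(t),X(t)]\,U(t)$, whence
\begin{equation}
\bigl\|U^\dag(1)PU(1)-U_*^\dag(1)PU_*(1)\bigr\|\le\int_0^1\bigl\|[V(t),X(t)]\bigr\|\,dt .
\end{equation}
The only difference from your version is that the Heisenberg operator appearing in the integrand is $P$ evolved under $H_*$ over $[t,1]$ rather than over $[0,t]$; since both evolutions last at most unit time, the Lieb-Robinson estimate and the final bound are unchanged. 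With this substitution your argument is a complete and correct proof of the lemma, which is more than the paper itself supplies.
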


Lemma \ref{LR} is a variant of the Lieb-Robinson bound \cite{LR72, NS06, HK06}. See Ref. [24] in Ref. \cite{Osb06} for a simple direct proof.

\begin{theorem} [formal statement of Theorem \ref{thm1}] \label{Adi}
Suppose $|\psi_0\rangle$ and $|\psi_1\rangle$ are two gapped ground states in the same phase in any spatial dimension. Given an arbitrarily small constant $\epsilon=\Theta(1)$, there exists a local quantum circuit $C$ of depth $O(1)$ such that
\begin{equation}
|\langle\psi_1|P|\psi_1\rangle-\langle\psi_0|C^\dag PC|\psi_0\rangle|\le\epsilon
\end{equation}
for any local operator $P$ with $\|P\|\le1$.
\end{theorem}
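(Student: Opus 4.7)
The plan is to construct the approximating circuit $C$ by localizing and then discretizing the quasiadiabatic evolution along the gapped path. First, using Definition \ref{def1}, fix a smooth path $H(t)$ of gapped local Hamiltonians with $H(0)=H_0$ and $H(1)=H_1$ and uniform spectral gap $\Delta>0$, together with the corresponding smooth ground-state path $|\psi(t)\rangle$. Hastings--Wen quasiadiabatic continuation then produces a time-dependent Hermitian generator $D(t)=\sum_j d_j(t)$ whose time-ordered evolution $U(1)=\mathcal{T}e^{-i\int_0^1 D(s)\,ds}$ exactly satisfies $U(1)|\psi_0\rangle=|\psi_1\rangle$ up to a phase. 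Each $d_j(t)$ is quasi-local about the site $j$, with operator norm decaying (sub)exponentially in the radius of its support at a rate controlled by $1/\Delta$; this is the point where the gap hypothesis enters.

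The second step is to truncate $D(t)$ to strictly finite range. Replace each $d_j(t)$ by its restriction $d_j^{(l)}(t)$ to a ball of radius $l$ around $j$, forming $D_*(t)=\sum_j d_j^{(l)}(t)$ with time-evolution operator $U_*(1)$. Applying Lemma \ref{Sch} directly would yield $\|U(1)-U_*(1)\|=O(Ne^{-\Omega(l)})$, useless in the thermodynamic limit, so instead I would bound the error only on expectation values of local operators $P$ with $\|P\|\le 1$. By a Lieb-Robinson argument in the spirit of Lemma \ref{LR}, only terms of $D(t)$ near the support of $P$ affect the Heisenberg-evolved operator; combining this light-cone estimate with the decay of $d_j(t)$ gives the uniform bound
\begin{equation}
\|U(1)^\dag P U(1)-U_*(1)^\dag P U_*(1)\|\le e^{-\Omega(l)}.
\end{equation}
Choosing $l=l(\epsilon)=O(1)$ sufficiently large makes this contribution to the total error at most $\epsilon/2$.

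The final step is to replace $U_*(1)$ by a constant-depth local quantum circuit. Since $D_*(t)$ is strictly $l$-local, in spatial dimension $d$ its terms can be colored into $O(l^d)$ nonoverlapping layers of mutually commuting local unitaries. An $M$-step first-order Trotter decomposition of the time-ordered evolution, with each step expanded into these layers, then yields a local quantum circuit of depth $O(Ml^d)$ whose Trotter error is $O(1/M)$ in operator norm (bounded via Lemma \ref{Sch} applied layerwise). Taking $M=O(1/\epsilon)$ brings the Trotter error below $\epsilon/2$, so the combined error is at most $\epsilon$ and the depth is $O(l^d/\epsilon)=O(1)$ as required.

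The main technical obstacle is the Lieb-Robinson step. Lemma \ref{LR} as stated applies only to strictly $2$-local Hamiltonians in 1D, whereas the quasiadiabatic generator $D(t)$ is merely quasi-local and the theorem is claimed in arbitrary spatial dimension. The heart of the argument is therefore a Lieb-Robinson bound for quasi-local time-dependent Hamiltonians in general dimension, carefully tracking how the subexponential tails of $D(t)$ combine with the linear growth of the effective light cone of $U(t)^\dag P U(t)$; this is standard in the Hastings--Wen and Bravyi--Hastings--Michalakis machinery but is the technically heaviest ingredient of the proof.
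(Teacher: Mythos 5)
Your overall strategy coincides with the paper's: quasiadiabatic continuation along the gapped path, spatial localization via a Lieb--Robinson argument so that errors are measured only on local observables rather than in global operator norm, and finally a Trotter decomposition into a constant-depth circuit. You correctly identify the crucial point in the truncation step --- that a naive application of Lemma \ref{Sch} gives an extensive error $O(Ne^{-\Omega(l)})$ and must be replaced by a light-cone argument --- and your remark that the technically heaviest ingredient is a Lieb--Robinson bound for quasi-local time-dependent generators in general dimension is accurate (the paper sidesteps this by assuming without loss of generality that the quasiadiabatic generator is a $2$-local 1D Hamiltonian and invoking Lemma \ref{LR}).

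However, there is a genuine gap in your final step. You claim that the $M$-step first-order Trotter decomposition of $U_*(1)$ has error ``$O(1/M)$ in operator norm (bounded via Lemma \ref{Sch} applied layerwise).'' This is false for an extensive Hamiltonian: the first-order Trotter error scales with the commutators of the layer Hamiltonians, i.e., as $O(N/M)$, and likewise a layerwise application of Lemma \ref{Sch} compares Hamiltonians differing by a norm of order $N$. So with $M=O(1/\epsilon)$ the global Trotter error does not go below $\epsilon/2$; it diverges in the thermodynamic limit. The very same localization issue you handled in the truncation step reappears here and must be handled the same way. The paper's resolution is to introduce a second cutoff $L=O(1)$: restrict the (now piecewise time-independent, strictly local) Hamiltonian to an $O(L)$-site neighborhood of the support of $P$, where its norm is $O(L)=O(1)$ so the Trotter error $\|U_*(1)-C_*\|$ is genuinely $O(1/s)$, and then apply the Lieb--Robinson lemma \emph{twice} --- once to compare $U_2$ with the restricted evolution $U_*$, and once to compare the full circuit $C$ with the restricted circuit $C_*$, the latter being legitimate because $C$ is itself the time evolution of a piecewise time-independent local Hamiltonian. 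Without this second localization (or an equivalent observation that Trotter errors outside the light cone of $P$ do not affect $C^\dag PC$), your depth bound $O(l^d/\epsilon)$ is not justified.
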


\begin{proof}
By Definition \ref{def1}, there exists a smooth path of gapped local Hamiltonians $H_0(t)$ with $0\le t\le1$ such that $|\psi_0\rangle$ and $|\psi_1\rangle$ are the ground states of $H_0(0)$ and $H_0(1)$, respectively. Quasiadiabatic continuation \cite{HW05} defines a smooth time-dependent local Hamiltonian $H_1(t)$ such that
\begin{equation}
|\langle\psi_1|P|\psi_1\rangle-\langle\psi_0|U_1^\dag(1)PU_1(1)|\psi_0\rangle|\le\epsilon/3
\end{equation}
for any local operator $P$ with $\|P\|\le1$. Assume without loss of generality that $H_1(t)=\sum_{j=1}^{N-1}h_1^{(j)}(t)$ is a 1D $2$-local Hamiltonian with open boundary conditions and that $P$ is an operator acting on the first spin. We approximate the time-dependent Hamiltonian $H_1(t)$ by the piecewise time-independent Hamiltonian
\begin{equation}
\sum_{j=1}^{N-1}h_2^{(j)}=H_2(t):=H_1([rt]/r)=\sum_{j=1}^{N-1}h_1^{(j)}([rt]/r)
\end{equation}
with sufficiently large $r=O(1)$. Let $l=O(1)$ be a cutoff and define
\begin{equation}
H_3(t)=\sum_{j=1}^{l-1}h_1^{(j)}(t)+\sum_{j=l}^{N-1}h_2^{(j)}(t).
\end{equation}
Lemma \ref{LR} implies
\begin{equation}
\|U_1^\dag(1)PU_1(1)-U_3^\dag(1)PU_3(1)\|\le\epsilon/6
\end{equation}
for sufficiently large $l=O(1)$. As $H_1(t)$ is smooth, Lemma \ref{Sch} implies
\begin{eqnarray}
&&\lim_{r\rightarrow+\infty}\left\|h_1^{(j)}(t)-h_2^{(j)}(t)\right\|=0\nonumber\\
&&\Rightarrow\|H_3(t)-H_2(t)\|\le\sum_{j=1}^{l-1}\left\|h_1^{(j)}(t)-h_2^{(j)}(t)\right\|\le\epsilon/12\nonumber\\
&&\Rightarrow\|U_3(1)-U_2(1)\|\le\epsilon/12\nonumber\\
&&\Rightarrow\|U_3^\dag(1)PU_3(1)-U_2^\dag(1)PU_2(1)\|\le\epsilon/6
\end{eqnarray}
for sufficiently large $r=O(l/\epsilon)=O(1)$. Hence,
\begin{equation}
\|U_1^\dag(1)PU_1(1)-U_2^\dag(1)PU_2(1)\|\le\epsilon/3.
\end{equation}
As $H_2(t)$ is piecewise time independent, assume without loss of generality that it is time independent. Define
\begin{equation}
H_2=H^o+H^e,~H^o=\sum_{j=1}^{[N/2]}h_2^{(2j-1)},~H^e=\sum_{j=1}^{[(N-1)/2]}h_2^{(2j)}
\end{equation}
such that the first-order Trotter decomposition is given by
\begin{eqnarray}
&&U_2(1)=(e^{-iH^o/s-iH^e/s})^s\approx(e^{-iH^o/s}e^{-iH^e/s})^s\nonumber\\
&&=\left(\prod_{j=1}^{[N/2]}e^{-ih_2^{(2j-1)}/s}\prod_{j=1}^{[(N-1)/2]}e^{-ih_2^{(2j)}/s}\right)^s=:C,
\end{eqnarray}
where $C$ is a $2$-local quantum circuit of depth $2s$. Let $L=O(1)$ be a cutoff and define
\begin{eqnarray}
&&H_*=\sum_{j=1}^{L-1}h_2^{(j)}=H_*^o+H_*^e,\nonumber\\
&&H_*^o=\sum_{j=1}^{[L/2]}h_2^{(2j-1)},~H_*^e=\sum_{j=1}^{[(L-1)/2]}h_2^{(2j)}.
\end{eqnarray}
Similarly,
\begin{eqnarray}
&&U_*(1)=(e^{-iH_*^o/s-iH_*^e/s})^s\approx(e^{-iH_*^o/s}e^{-iH_*^e/s})^s\nonumber\\
&&=\left(\prod_{j=1}^{[L/2]}e^{-ih_2^{(2j-1)}/s}\prod_{j=1}^{[(L-1)/2]}e^{-ih_2^{(2j)}/s}\right)^s=:C_*,
\end{eqnarray}
where $C_*$ is also a $2$-local quantum circuit of depth $2s$. The standard error analysis of the Trotter decomposition leads to
\begin{eqnarray}
&&\|H_*\|=O(L)=O(1)\Rightarrow\|U_*(1)-C_*\|\le\epsilon/18\nonumber\\
&&\Rightarrow\|U_*^\dag(1)PU_*(1)-C_*^\dag PC_*\|\le\epsilon/9
\end{eqnarray}
for sufficiently large $s=O(1)$. We observe that $C=\mathcal Te^{-i\int_0^2H^C(t)dt}$ is the (unitary) time-evolution operator for the piecewise time-independent Hamiltonian $H^C(t)$, where $H^C(t)=H^o$ if $[st]$ is odd and $H^C(t)=H^e$ if $[st]$ is even. Similarly, $C_*=\mathcal Te^{-i\int_0^2H_*^C(t)dt}$, where $H_*^C(t)=H_*^o$ if $[st]$ is odd and $H_*^C(t)=H_*^e$ if $[st]$ is even. Lemma \ref{LR} implies
\begin{eqnarray}
&&\|U_2^{\dag}(1)PU_2(1)-U_*^\dag(1)PU_*(1)\|\le\epsilon/9,\\
&&\|C^\dag PC-C_*^\dag PC_*\|\le\epsilon/9
\end{eqnarray}
for sufficiently large $L=O(1)$. Hence,
\begin{equation}
\|U_2^\dag(1)PU_2(1)-C^\dag PC\|\le\epsilon/3.
\end{equation}
Finally,
\begin{eqnarray}
&&|\langle\psi_1|P|\psi_1\rangle-\langle\psi_0|C^\dag PC|\psi_0\rangle|\nonumber\\
&&\le|\langle\psi_1|P|\psi_1\rangle-\langle\psi_0|U_1^\dag(1)PU_1(1)|\psi_0\rangle|\nonumber\\
&&+|\langle\psi_0|U_1^\dag(1)PU_1(1)|\psi_0\rangle-\langle\psi_0|C^\dag PC|\psi_0\rangle|\nonumber\\
&&\le\epsilon/3+\|U_1^\dag(1)PU_1(1)-C^\dag PC\|\nonumber\\
&&\le\epsilon/3+\|U_1^\dag(1)PU_1(1)-U_2^\dag(1)PU_2(1)\|\nonumber\\
&&+\|U_2^\dag(1)PU_2(1)-C^\dag PC\|\nonumber\\
&&\le\epsilon/3+\epsilon/3+\epsilon/3=\epsilon.
\end{eqnarray}
\end{proof}

A minor modification of the proof of Theorem \ref{Adi} leads to similar results in fermionic systems and/or in the presence of symmetry.

\begin{corollary} [formal statement of Corollary \ref{cor1}] \label{coro}
Suppose $|\psi_0\rangle$ and $|\psi_1\rangle$ are two symmetric gapped ground states in the same SPT phase in any spatial dimension. Given an arbitrarily small constant $\epsilon=\Theta(1)$, there exists a symmetric local quantum circuit $C$ of depth $O(1)$ such that
\begin{equation}
|\langle\psi_1|P|\psi_1\rangle-\langle\psi_0|C^\dag PC|\psi_0\rangle|\le\epsilon
\end{equation}
for any local operator $P$ with $\|P\|\le1$.
\end{corollary}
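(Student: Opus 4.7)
The plan is to mimic the proof of Theorem \ref{Adi} step by step, checking that symmetry can be preserved throughout. By the definition of an SPT phase, there exists a smooth path of \emph{symmetric} gapped local Hamiltonians $H_0(t)$ interpolating between $H_0(0)$ and $H_0(1)$, so the input to the quasiadiabatic construction is already symmetric. The key structural observation is that every approximation used in the proof of Theorem \ref{Adi} -- quasiadiabatic continuation, piecewise-constant time discretization, spatial truncation, and Trotterization -- can be chosen to output symmetric objects provided we are careful at the level of the local terms.

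First I would show that the quasiadiabatic Hamiltonian $H_1(t)$ can be chosen symmetric. The quasiadiabatic generator is built from the spectral projector onto the ground space and from $\partial_t H_0(t)$; both commute with the on-site symmetry, so the generator itself is symmetric, and a standard Lieb-Robinson truncation argument (as in \cite{HW05, BHM10}) turns it into a \emph{symmetric} local Hamiltonian $H_1(t)$ with the same bounds used in Theorem \ref{Adi}. Lemma \ref{Sch} and Lemma \ref{LR} are insensitive to symmetry, so the resulting time-evolution $U_1(1)$ is symmetric and still approximates the action mapping $|\psi_0\rangle$ to $|\psi_1\rangle$ within $\epsilon/3$ on local observables.

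Next I would adapt the Trotter step so that each elementary gate is symmetric, not merely the whole layer. The issue is that when we decompose a symmetric $2$-local Hamiltonian $H_2 = \sum_j h_2^{(j)}$ into individual terms $h_2^{(j)}$, the summands need not be individually symmetric. For an on-site symmetry with group elements $U_g = \bigotimes_j u_g$, I would replace each local term by its group average
\begin{equation}
\tilde h_2^{(j)} = \frac{1}{|G|}\sum_{g\in G}(u_g\otimes u_g)\, h_2^{(j)}\,(u_g^\dag\otimes u_g^\dag),
\end{equation}
which is manifestly symmetric, has the same support, and satisfies $\sum_j \tilde h_2^{(j)} = \sum_j h_2^{(j)} = H_2$ because $H_2$ is already symmetric. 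The Trotterized circuit
\begin{equation}
\tilde C = \Bigl(\prod_{j}e^{-i\tilde h_2^{(2j-1)}/s}\prod_{j}e^{-i\tilde h_2^{(2j)}/s}\Bigr)^s
\end{equation}
is then a \emph{symmetric} $2$-local circuit of depth $2s=O(1)$, and the standard Trotter error estimate of Theorem \ref{Adi} applies unchanged to bound $\|U_2^\dag(1)PU_2(1)-\tilde C^\dag P\tilde C\|$ by $\epsilon/3$.

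The remaining bounds (piecewise time-independent approximation and spatial truncation via Lemma \ref{LR}) carry over verbatim, so assembling them by the triangle inequality yields the claim. The main obstacle, and the step that warrants the most care, is the symmetric Trotter construction: I must ensure that the group-averaging procedure is compatible with the locality structure used in the Lieb-Robinson truncation, and that for more general (non-on-site) symmetries the corresponding averaging still preserves the $2$-local support. For on-site symmetries this is automatic, which is why Corollary \ref{coro} follows from a \emph{minor} modification of the proof of Theorem \ref{Adi} as the authors indicate.
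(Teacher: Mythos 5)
Your proposal is correct and follows exactly the route the paper intends: the paper justifies Corollary \ref{coro} only with the remark that ``a minor modification of the proof of Theorem \ref{Adi}'' suffices, and your argument is precisely that modification spelled out. The one substantive detail you supply beyond the paper --- group-averaging each local term $h_2^{(j)}$ over the on-site symmetry so that every Trotter gate (not merely each layer) is symmetric, while preserving the sum $H_2$, the term norms, and the locality needed for Lemma \ref{LR} --- is exactly the point where care is required, and you handle it correctly.
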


The main result of Ref. \cite{Osb07} is an immediate corollary of Theorem \ref{Adi}.

\begin{corollary} [efficient classical simulation of adiabatic quantum computation with a constant gap in any spatial dimension]
Suppose we are given a smooth path of gapped local Hamiltonians $H(t)$ with $0\le t\le1$, where the ground state $|\psi_0\rangle$ of $H(0)$ is simple in the sense that $\langle\psi_0|P|\psi_0\rangle$ can be efficiently computed classically for any local operator $P$ with $\|P\|\le1$. Then $\langle\psi_1|P|\psi_1\rangle$ can be efficiently computed classically up to an arbitrarily small constant additive error, where $|\psi_1\rangle$ is the ground state of $H(1)$ encoding the solution of the adiabatic quantum computation.
\end{corollary}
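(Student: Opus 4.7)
The plan is to apply Theorem \ref{Adi} directly. By hypothesis, $|\psi_0\rangle$ and $|\psi_1\rangle$ lie on a smooth path of gapped local Hamiltonians, so they are in the same phase in the sense of Definition \ref{def1}. Given an arbitrarily small constant $\epsilon$, Theorem \ref{Adi} therefore provides a local quantum circuit $C$ of depth $O(1)$ such that
\begin{equation}
|\langle\psi_1|P|\psi_1\rangle-\langle\psi_0|C^\dag PC|\psi_0\rangle|\le\epsilon
\end{equation}
for every local operator $P$ with $\|P\|\le 1$.

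The next step I would make is the crucial locality observation: conjugating a local operator by a constant-depth local quantum circuit yields another local operator. Concretely, if $P$ is supported on $O(1)$ sites and $C$ has depth $d=O(1)$, then by a simple light-cone (causal-cone) argument the support of $C^\dag P C$ is contained in the $O(d)$-neighborhood of $\mathrm{supp}(P)$, which is still of constant size. Moreover $\|C^\dag P C\|=\|P\|\le 1$, so $C^\dag P C$ is a local operator of the form the hypothesis allows us to evaluate classically in the state $|\psi_0\rangle$.

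Combining these two facts yields the result: compute $\langle\psi_0|C^\dag P C|\psi_0\rangle$ classically in time polynomial in the system size (possible by assumption on $|\psi_0\rangle$ since $C^\dag P C$ is local with bounded norm), and output the result as the approximation to $\langle\psi_1|P|\psi_1\rangle$. The triangle-inequality estimate above guarantees additive error at most $\epsilon$. Since $\epsilon$ is an arbitrary constant, this is the claimed efficient classical simulation of the adiabatic computation.

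There is essentially no technical obstacle beyond checking these two observations; all the analytical content has already been packaged into Theorem \ref{Adi}. The only point that might deserve a brief explicit remark is that the hypothesis is preserved under circuit conjugation, namely that the classical simulation oracle for $|\psi_0\rangle$ accepts arbitrary constant-support operators with bounded norm, so in particular it accepts $C^\dag P C$ uniformly in $P$ and in the system size.
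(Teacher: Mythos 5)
Your proof is correct and is exactly the argument the paper intends: the paper simply declares this statement ``an immediate corollary of Theorem \ref{Adi}'' without writing out the details, and your light-cone observation that $C^\dag PC$ remains a local operator of bounded norm is precisely the step being left implicit. Nothing is missing.
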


\section{SYMMETRY PROTECTED TOPOLOGICAL PHASE} \label{asec2}

We review the classification of 1D SPT phases (Appendix \ref{sptc}), and begin by recalling two key notions: projective representations (Appendix \ref{proj_rep}) and matrix product states (Appendix \ref{sptb}).

\subsection{Projective representation} \label{proj_rep}

In the context of this paper, a projective representation is a mapping $u$ from the symmetry group $G$ to unitary matrices such that
\begin{equation}
u(g_1)u(g_2)=\omega(g_1,g_2)u(g_1g_2),
\end{equation}
where $\omega(g_1,g_2)$ (called the factor system of the projective representation) is a $U(1)$ phase factor, cf. $u$ is a linear representation of $G$ if the factor system is trivial, i.e., $\omega(g_1,g_2)=1$ for any $g_1,g_2\in G$. The associativity of $G$ implies
\begin{equation} \label{2cocycle_om}
\omega(g_2,g_3)\omega(g_1,g_2g_3)=\omega(g_1,g_2)\omega(g_1g_2,g_3).
\end{equation}
Multiplying $u$ by $U(1)$ phase factors leads to a different projective representation $u'$ with the factor system $\omega'$:
\begin{eqnarray} \label{omom}
&&u'(g)=\beta(g)u(g),~\beta(g)\in U(1),~\forall g\in G\nonumber\\
&&\Rightarrow\omega'(g_1,g_2)=\omega(g_1,g_2)\beta(g_1)\beta(g_2)/\beta(g_1g_2).
\end{eqnarray}
Two projective representations $u$ and $u'$ are equivalent if and only if they differ only by prefactors. Correspondingly, their factor systems $\omega$ and $\omega'$ are said to be in the same equivalence class $[\omega]$. Let $u_1$ and $u_2$ be two projective representations with the factor systems $\omega_1$ and $\omega_2$ in the equivalence classes $[\omega_1]$ and $[\omega_2]$, respectively. Apparently, $u_1\otimes u_2$ is a projective presentation with the factor system $\omega_1\omega_2$ in the equivalence class $[\omega_1\omega_2]$. By defining $[\omega_1]\cdot[\omega_2]=[\omega_1\omega_2]$, the equivalence classes of factor systems form an Abelian group [called the second cohomology group $H^2(G,U(1))$], where the identity element is the equivalence class that contains the trivial factor system.

\subsection{Matrix product state} \label{sptb}

Suppose we are working with a chain of $N$ spins (qudits), and the local dimension of each spin is $d=\Theta(1)$. Let $\{|i_k\rangle\}_{i_k=1}^d$ be the computational basis of the Hilbert space of the spin $k$.

\begin{figure}
\includegraphics[width=\linewidth]{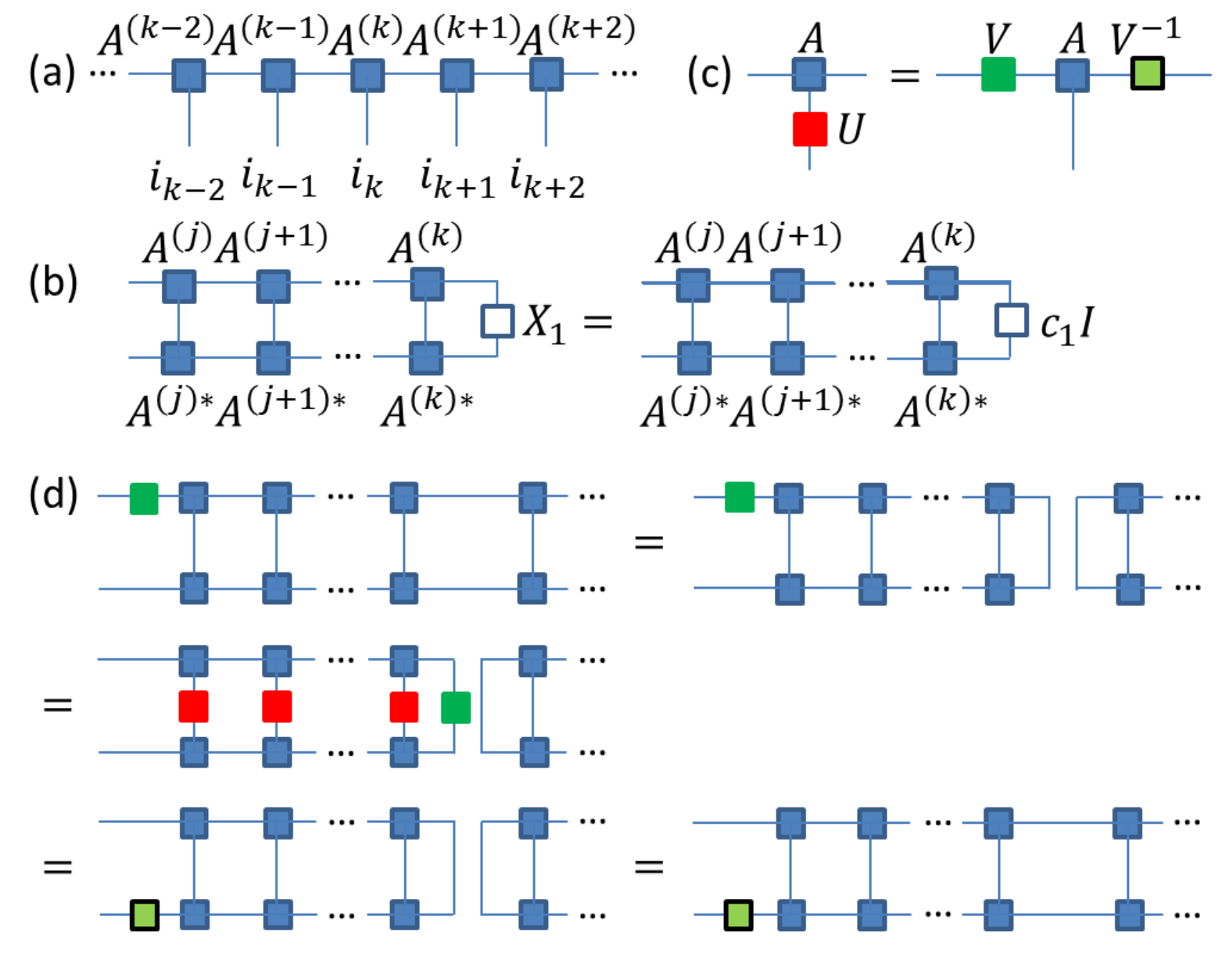}
\caption{(Color online) (a) Graphical representation of MPS (\ref{MPS}) \cite{Sch11}. Each square represents a tensor $A^{(k)}$ with two bond indices (horizontal lines) and one physical index (vertical line). The bond indices are contracted sequentially with periodic boundary conditions (not shown). (b) The condition (\ref{SRC1}) for short-range correlated MPS. The graphical equation is approximate up to error $e^{-\Omega(k-j)}$, which can be neglected in the thermodynamic limit $N\rightarrow+\infty$ if $k-j=\Theta(N)$. (c) Graphical representation of (\ref{sym}). The site labels are not shown. (d) is a consequence of (b) and (c). Note that a prefactor of the second, third, and fourth tensor networks is not shown.} \label{Pre}
\end{figure}

\begin{definition} [matrix product state (MPS) \cite{PVWC07, FNW92}]
Let $\{D_k\}_{k=0}^n$ with $D_0=D_n$ be a sequence of positive integers. As illustrated in Fig. \ref{Pre}(a), an MPS $|\Psi\rangle$ takes the form
\begin{equation} \label{MPS}
|\Psi\rangle=\sum_{i_1,i_2,\ldots,i_N=1}^d\mathrm{tr}\left(A_{i_1}^{(1)}A_{i_2}^{(2)}\cdots A_{i_N}^{(N)}\right)|i_1i_2\cdots i_N\rangle,
\end{equation}
where $A_{i_k}^{(k)}$ is a matrix of size $D_{k-1}\times D_k$. Define $D=\max\{D_k\}_{k=0}^n$ as the bond dimension of the MPS $|\Psi\rangle$.
\end{definition}

The ground states of 1D gapped Hamiltonians can be represented as MPSs of small bond dimension \cite{Has07, VC06}. The ground states of gapped local Hamiltonians are short-range correlated in the sense that all connected correlation functions decay exponentially with distance \cite{Has04, NS06, HK06}.

For each $k$, define two linear maps
\begin{equation}
\mathcal E_k(X)=\sum_{i_k=1}^dA_{i_k}^{(k)}XA_{i_k}^{(k)\dag},~\mathcal E_k^*(X)=\sum_{i_k=1}^dA_{i_k}^{(k)\dag}XA_{i_k}^{(k)}.
\end{equation}
Any MPS can be transformed into the so-called canonical form \cite{PVWC07} such that $\mathcal E_k(I)=I$ and $\mathcal E_k^*(M_{k-1})=M_k$, where $I$ is an identity matrix, and $M_k$ is a positive diagonal matrix. A canonical MPS is short-range correlated if for any $X_1,X_2$ with $\|X_1\|,\|X_2\|\le1$ there exist coefficients $c_1,c_2$ such that
\begin{eqnarray} 
&&\|\mathcal E_j\mathcal E_{j+1}\cdots\mathcal E_k(X_1-c_1I)\|=e^{-\Omega(k-j)},\label{SRC1}\\
&&\|\mathcal E_k^*\mathcal E_{k-1}^*\cdots\mathcal E_j^*(X_2-c_2M_{j-1})\|=e^{-\Omega(k-j)}\label{SRC2}
\end{eqnarray}
at large $k-j$, i.e., $X_1$ can be replaced by $c_1I$ up to error $e^{-\Omega(k-j)}$, as illustrated in Fig. \ref{Pre}(b). Hence $X_1$ (and $X_2$) can be replaced by any matrix up to a multiplicative prefactor and an exponentially small error. When $A_{i_k}^{(k)}$'s are site independent (and the MPS $|\Psi\rangle$ is translationally invariant), (\ref{SRC1}) and (\ref{SRC2}) are equivalent to the condition \cite{FNW92, PVWC07} that the second largest (in magnitude) eigenvalue $|\nu_2|$ of $\mathcal E_k$ is less than $1$, and the left-hand sides of (\ref{SRC1}) and (\ref{SRC2}) decay as $O(|\nu_2|^{-(k-j)})$.

\subsection{Classification of 1D SPT phases} \label{sptc}

1D SPT phases are completely characterized by the degenerate edge states carrying projective representations of the symmetry group, i.e., there is a one-to-one correspondence between 1D SPT phases and the equivalence classes of projective representations. The edge states can be easily seen from the short-range correlated MPS representation (\ref{MPS}) of SPT states. Suppose $U$ is an on-site symmetry with the symmetry group $G$, i.e., $U$ is an isomorphism of $G$ such that $U(g)^{\otimes N}|\Psi\rangle=|\Psi\rangle$ for any $g\in G$. Recall that $\{|i_k\rangle\}_{i_k=1}^d$ is the computational basis of the Hilbert space of the spin $k$. One can show that $A_{i_k}^{(k)}$'s satisfy \cite{PWS+08, CGW11}
\begin{equation} \label{sym}
\sum_{i'_k}\langle i_k|U(g)|i'_k\rangle A_{i'_k}^{(k)}=e^{i\theta(g)}V_{k-1}(g)A_{i_k}^{(k)}V_k^{-1}(g),
\end{equation}
as illustrated in Fig. \ref{Pre}(c). Furthermore, $e^{i\theta(g)}$ is a 1D representation of $G$. It can be effectively eliminated by blocking sites unless $G$ has an infinite number of 1D representations \cite{CGW11}; here we drop $e^{i\theta(g)}$ for simplicity. $V_k(g)$ is a projective representation of $G$. The equivalence class of $V_k(g)$ is site independent and labels the SPT phase of the MPS $|\Psi\rangle$. As such, 1D SPT phases are classified by the second cohomology group $H^2(G,U(1))$ in the presence of an on-site symmetry $U$ \cite{CGW11, SPC11}. In particular, all 1D gapped spin systems are in the same phase in the absence of symmetry \cite{CGW11, SPC11}, cf. $H^2(G,U(1))$ is trivial if $G$ is trivial.

1D SPT phases can be detected by nonlocal (string) order parameters. When the symmetry group $G$ is Abelian, there is a set of string order parameters from which the SPT phase of any symmetric gapped ground state can be extracted \cite{PT12, Mar13}. When $G$ is not necessarily Abelian, a different and more complicated type of nonlocal order parameter fully characterizes SPT phases \cite{HPCS12, PT12}.

\section{COMPLETE PROOF OF PROPOSITION \ref{prop4}} \label{full}

\begin{figure}
\includegraphics[width=\linewidth]{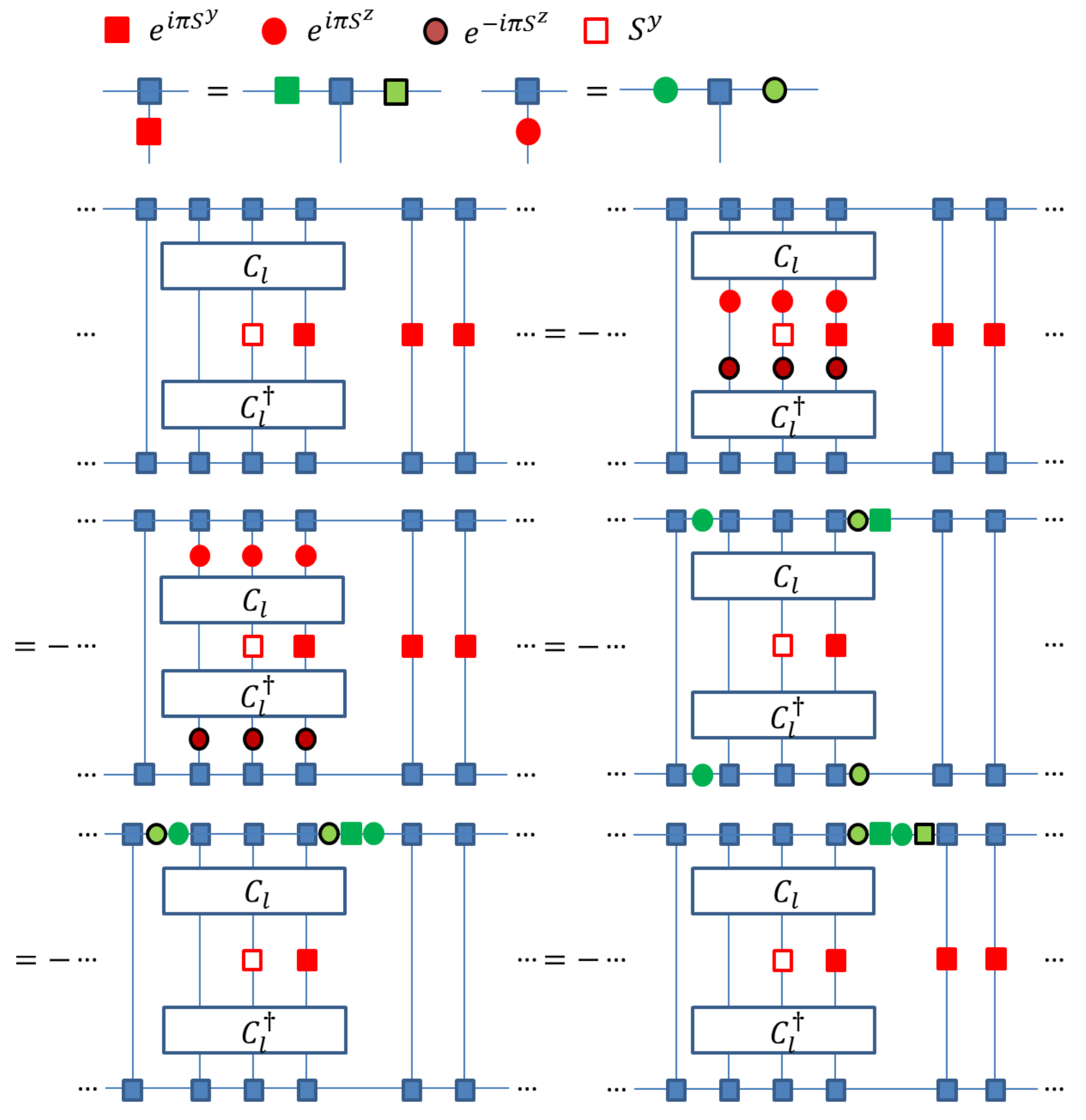}
\caption{(Color online) Graphical proof of $\langle\psi|Q'|\psi\rangle=0$ in the thermodynamic limit $N\rightarrow+\infty$ under the assumption that $|\psi\rangle$ is in the trivial phase.} \label{sop}
\end{figure}

\begin{proof} [Proof of Proposition \ref{prop4}]
We use the string order operator $Q$ (\ref{stroo}). Its expectation value $\lim_{N\rightarrow+\infty}\langle Q\rangle$ is zero in the trivial phase and nonzero in the Haldane phase. As shown in Fig. \ref{sopm}, $Q'=C^\dag QC=Q_l\prod_{j=N/3+m+1}^{2N/3-m-1}e^{i\pi S^y_j}Q_r$ remains a string (order) operator, where the end operators $Q_l$ and $Q_r$ are given by (\ref{end1}) and (\ref{end2}), respectively. It suffices to prove $\lim_{N\rightarrow+\infty}\langle\psi|Q'|\psi\rangle=0$ under the assumption that $|\psi\rangle$ is in the trivial phase.

See Fig. \ref{sop} for a graphical proof.  We focus on the left end of the string (order) operator $Q'$. The green squares and circles carry projective representations induced by the corresponding symmetry operators (red squares and circles, respectively) [cf. Fig. \ref{Pre}(c)]. We briefly explain each step of the graphical equation chain in Fig. \ref{sop}:\\
\indent Step 1: $e^{-i\pi S^z}S^ye^{i\pi S^z}=-S^y$ and $e^{-i\pi S^z}S^ze^{i\pi S^z}=S^z$.\\
\indent Step 2: $C_l$ is symmetric.\\
\indent Step 3: (\ref{sym}) Figure \ref{Pre}(c).\\
\indent Step 4: Figure \ref{Pre}(d).\\
\indent Step 5: (\ref{sym}) Figure \ref{Pre}(c).

In the last tensor network, the four green objects together contribute a trivial phase factor as $|\psi\rangle$ is in the trivial phase. Therefore, the first tensor network is zero due to the minus signs in the graphical equation chain.
\end{proof}

\section{STATES IN DIFFERENT PHASES---LINEAR DEPTH} \label{all_ld}

\begin{theorem}
Suppose $|\psi_0\rangle$ and $|\psi_1\rangle$ are two symmetric gapped ground states in different SPT phases. Given an arbitrarily small constant $\epsilon=\Theta(1)$, there exist $|\psi'_0\rangle,|\psi'_1\rangle$ and a symmetric local quantum circuit $C$ of depth $O(N)$ such that $|\psi'_1\rangle=C|\psi'_0\rangle$ and
\begin{equation}
|\langle\psi_k|P|\psi_k\rangle-\langle\psi'_k|P|\psi'_k\rangle|\le\epsilon~(k=0,1)
\end{equation}
for any local operator $P$ with $\|P\|\le1$.
\end{theorem}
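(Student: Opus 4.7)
The plan is to generalize the proof of Proposition \ref{prop3} using the $H^2(G,U(1))$ classification of 1D SPT phases reviewed in Appendix \ref{sptc}. Denote by $[\omega_0]$ and $[\omega_1]$ the SPT classes of $|\psi_0\rangle$ and $|\psi_1\rangle$, and fix representative projective representations $V_0$ and $V_1$ with factor systems in these classes. The key construction is a common RG fixed-point Hilbert space whose on-site factor carries four virtual qudits transforming as $V_0,V_0^{-1},V_1,V_1^{-1}$; since each pair $V_k\otimes V_k^{-1}$ is a linear representation of $G$, the enlarged on-site symmetry action is a genuine (not merely projective) representation.

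For the first step, I would define RG fixed-point states $|\phi_0\rangle$ and $|\phi_1\rangle$ as in Fig.~\ref{swap}: in $|\phi_0\rangle$ the $V_0$-qudits form symmetric singlets across neighboring sites while the $V_1$-qudits form symmetric singlets within each site, and the roles are exchanged in $|\phi_1\rangle$. Multiplicativity of the SPT label under tensoring (the intra-site sector contributes the identity of $H^2(G,U(1))$) then places $|\phi_k\rangle$ in phase $[\omega_k]$. For the second step, I would map $|\phi_0\rangle$ to $|\phi_1\rangle$ exactly by a symmetric $2$-local circuit $C_\phi$ of depth $O(N)$ that runs the sequential \textsc{swap} pattern of Fig.~\ref{swap}(c) in parallel on the two sectors: one direction ``closes'' the $V_0$-singlets (inter-site $\to$ intra-site), the opposite direction ``opens'' the $V_1$-singlets (intra-site $\to$ inter-site). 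Each \textsc{swap} interchanges two virtual qudits on adjacent sites that carry the \emph{same} $G$-representation, so it intertwines the on-site symmetry and is therefore symmetric.

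For the third step, I would invoke Corollary \ref{coro}. After embedding $|\psi_k\rangle$ into the enlarged Hilbert space by tensoring with a fixed symmetric state on the new auxiliary virtual qudits that lies in the trivial SPT class (so the SPT label of $|\psi_k\rangle$ is unaltered), there exist symmetric local quantum circuits $C_0,C_1$ of depth $O(1)$ with $|\psi'_k\rangle=C_k|\phi_k\rangle$ satisfying $|\langle\psi_k|P|\psi_k\rangle-\langle\psi'_k|P|\psi'_k\rangle|\le\epsilon$ on all bounded-norm local observables. Setting $C=C_1C_\phi C_0^\dag$ then produces the advertised symmetric circuit of linear depth with $|\psi'_1\rangle=C|\psi'_0\rangle$.

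The main difficulty I expect is ensuring strict $G$-symmetry of every gate in $C_\phi$: a naive swap between virtual qudits belonging to inequivalent projective sectors would generally not commute with the on-site symmetry. Carrying both the $V_0$- and $V_1$-sector simultaneously on every site makes all the relevant swaps act between isomorphic $G$-modules, which resolves the issue. A secondary, routine check is that the auxiliary enlargement preserves both the symmetry and the SPT label, which follows because trivially projective (linearly represented) factors are the identity in $H^2(G,U(1))$.
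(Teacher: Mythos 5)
Your proposal is correct and follows essentially the same route as the paper: pass to RG fixed-point states, connect them by a linear-depth staircase of symmetric \textsc{swap} gates as in Fig.~\ref{swap}(c), and attach constant-depth symmetric circuits from Corollary~\ref{coro} at both ends. The only difference is presentational --- the paper routes through the trivial fixed point via $C=C_1C_{\phi,1}C_{\phi,0}^\dag C_0^\dag$ and simply asserts that the swaps are symmetric, whereas you make the common enlarged on-site Hilbert space (carrying the $V_k\otimes V_k^{-1}$ sectors so every swap acts between isomorphic $G$-modules) explicit, which is a worthwhile clarification but not a different argument.
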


\begin{proof}
The proof proceeds analogously to that of Proposition \ref{prop3}. Assume without loss of generality that $|\psi_k\rangle$ is in a nontrivial SPT phase. Let $|\phi\rangle$ be the RG fixed-point state in the trivial SPT phase, and $|\phi_k\rangle$ be the RG fixed-point state in the same SPT phase as $|\psi_k\rangle$. Figures \ref{swap}(a) and \ref{swap}(b) illustrate the structures of $|\phi\rangle$ and $|\phi_k\rangle$, respectively.

As shown in Fig. \ref{swap}(c), $|\phi\rangle$ and $|\phi_k\rangle$ can be exactly mapped to each other by applying $O(N)$ $2$-local \textsc{swap} gates sequentially. These \textsc{swap} gates are symmetric with respect to any on-site symmetry and form a symmetric $2$-local quantum circuit $C_{\phi,k}$ of depth $O(N)$. As $|\psi_k\rangle$ and $|\phi_k\rangle$ are in the same SPT phase, there exists a symmetric local quantum circuit $C_k$ of depth $O(1)$ (Corollary \ref{coro}) such that $|\langle\psi_k|P|\psi_k\rangle-\langle\psi'_k|P|\psi'_k\rangle|\le\epsilon$ for any local operator $P$ with $\|P\|\le1$, where $|\psi'_k\rangle=C_k|\phi_k\rangle$. Finally, $C=C_1C_{\phi,1}C_{\phi,0}^\dag C_0^\dag$ is the symmetric circuit of linear depth that connects $|\psi_0\rangle$ and $|\psi_1\rangle$.
\end{proof}

\section{STATES IN DIFFERENT PHASES---LINEAR LOWER BOUND} \label{all_llb}

The proof of Proposition \ref{prop4} can be generalized to other Abelian on-site symmetry. Indeed, string order parameters do (do not) fully characterize 1D SPT phases with Abelian (non-Abelian) on-site symmetry \cite{PT12, Mar13}. When the symmetry group is not necessarily Abelian, a different and more complicated type of nonlocal order parameter \cite{HPCS12,PT12} measures all gauge-invariant phase factors, which provide a complete description of the equivalence class of projective representations.

\begin{figure*}
\includegraphics[width=0.534\linewidth]{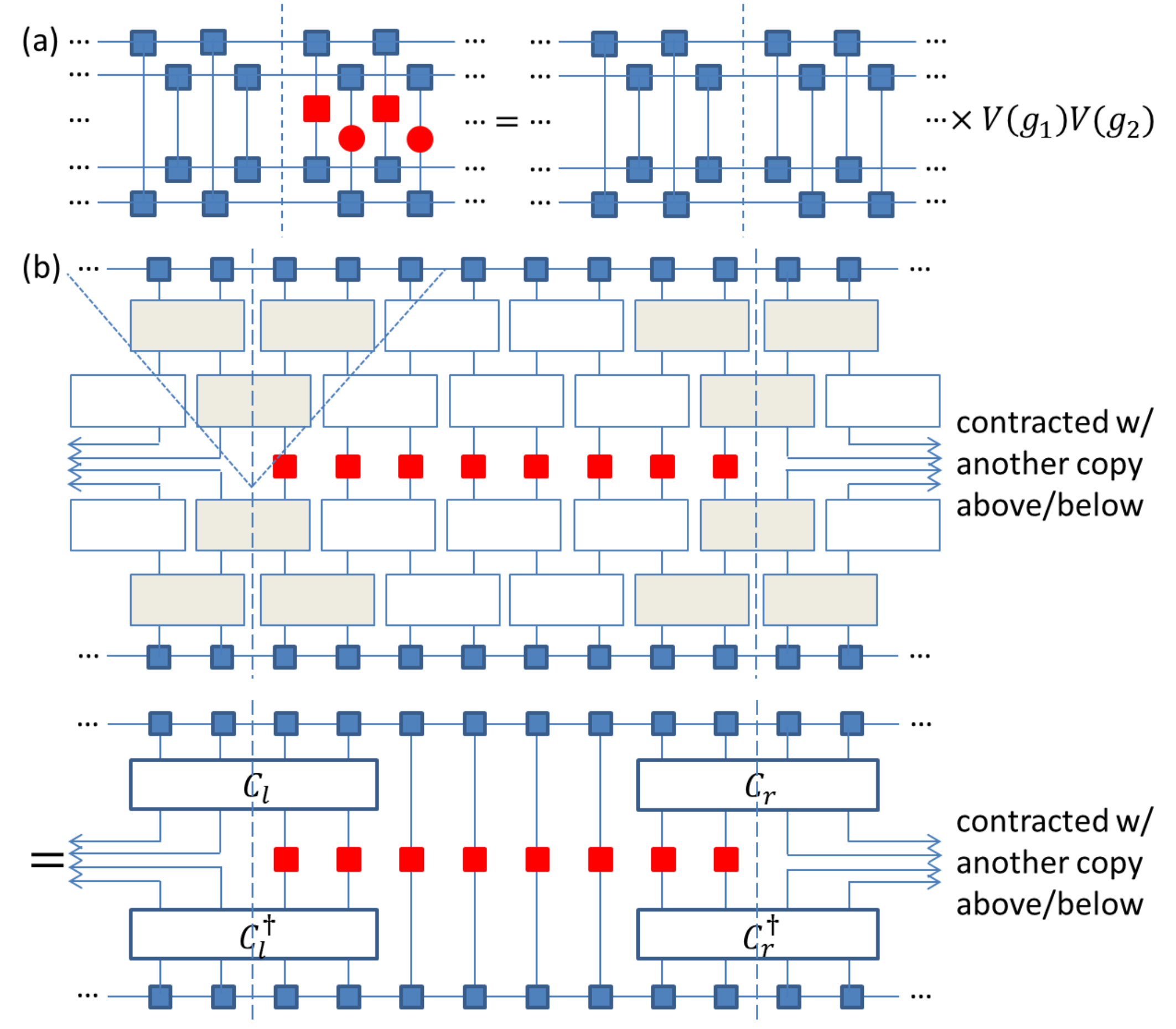}
\includegraphics[width=0.46\linewidth]{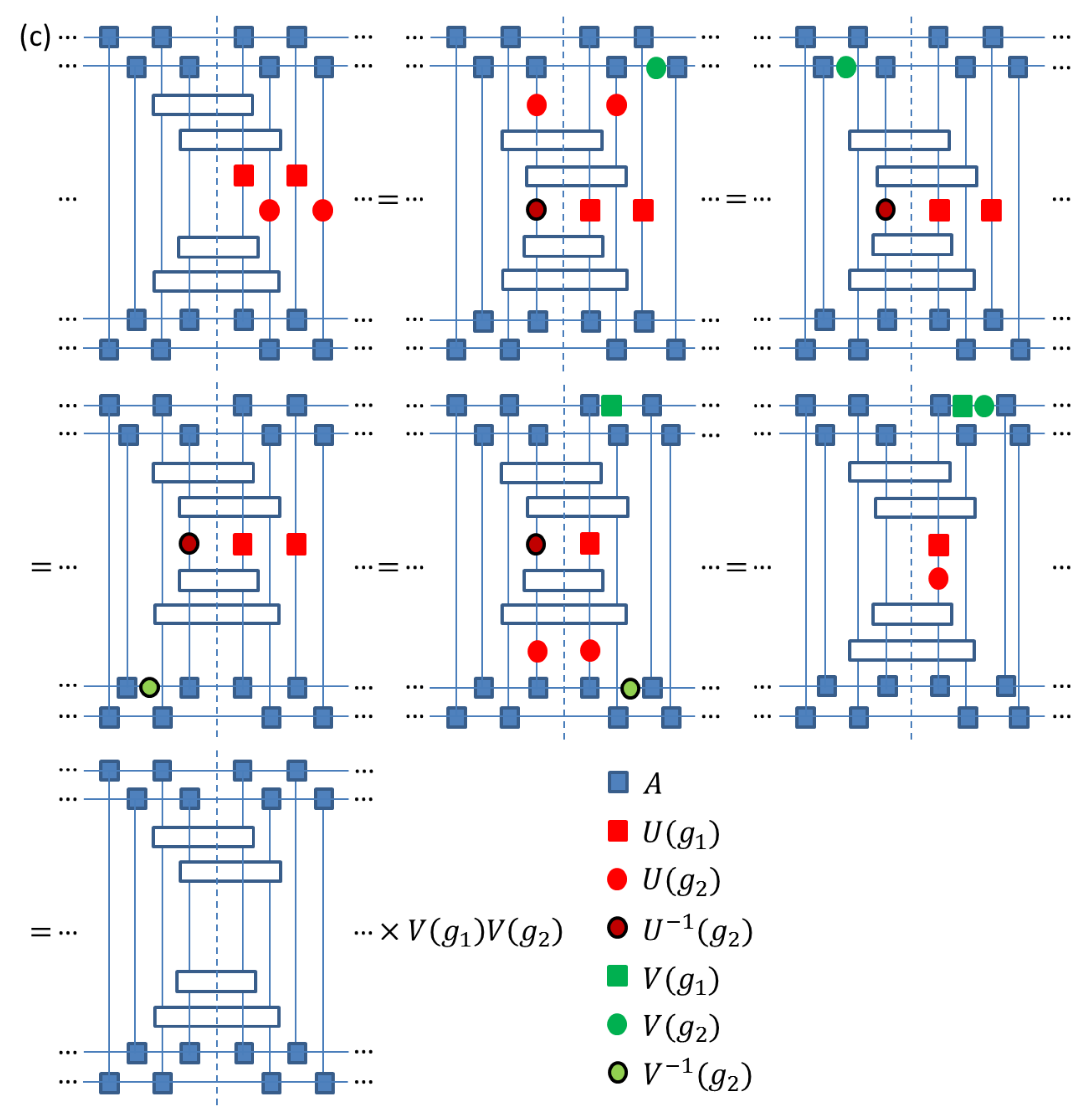}
\caption{(Color online) (a) The domain wall (dashed line) that contributes the local phase factor $V(g_1)V(g_2)$ \cite{PT12}. (b) The short rectangles are the local unitaries in $C$. The (white) unitaries outside the causal cones (dotted lines) of the domain walls can be removed, as they are symmetric. Then we merge the (gray) symmetric local quantum gates inside each casual cone into one symmetric quantum gate (long rectangle) of sublinear support. (c) Graphical proof of the invariance of the local phase factor for the domain wall in (a) under symmetric local quantum circuits of sublinear depth.} \label{Net}
\end{figure*}

\begin{theorem} \label{Gen}
Suppose $|\psi\rangle$ and $C|\psi\rangle$ are two symmetric gapped ground states in 1D spin systems with an on-site symmetry $U$, where $C$ is a symmetric local quantum circuit of sublinear depth. Then $|\psi\rangle$ and $C|\psi\rangle$ are in the same SPT phase.
\end{theorem}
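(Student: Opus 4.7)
The plan is to generalize the string order parameter technique from Proposition \ref{prop4} using the nonlocal order parameters of \cite{HPCS12, PT12}, which measure gauge-invariant phase factors of the form $V(g_1)V(g_2)$ associated with a domain wall where the on-site symmetry action changes from $U(g_1)^{\otimes n_1}$ to $U(g_2)^{\otimes n_2}$ (Fig.~\ref{Net}(a)). These phase factors provide a complete set of invariants of the equivalence class $[\omega]\in H^2(G,U(1))$ of the boundary projective representation and hence fully distinguish 1D SPT phases for arbitrary, possibly non-Abelian, $G$. To prove the theorem, it therefore suffices to show that the phase factor associated with every such domain wall is invariant under symmetric local quantum circuits of sublinear depth.

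Given a symmetric sublinear-depth circuit $C$ and a nonlocal order operator $Q$ containing two widely-separated domain walls, I would first run the causal cone argument from Proposition~\ref{prop4} verbatim (Fig.~\ref{Net}(b)): symmetric gates of $C,C^{\dagger}$ whose support lies entirely outside the causal cones of the two domain walls cancel pairwise in $\langle\psi|C^{\dagger}QC|\psi\rangle$. The remaining gates inside each causal cone can be merged into single symmetric unitaries $C_l,C_r$ whose supports are sublinear in $N$ and, for sufficiently large $N$, disjoint from each other and from the bulk of the chain between them. Thus $C^{\dagger}QC$ is again a nonlocal order operator with the same bulk symmetry-action structure but with the two domain walls dressed by $C_l$ and $C_r$.

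The core step is to show that the $U(1)$ phase factor extracted at a dressed domain wall equals the one at the bare domain wall. I would do this by a graphical tensor-network argument analogous to Fig.~\ref{sop}. The ingredients are: (i) the MPS symmetry relation (\ref{sym}) converts each on-site $U(g_i)$ into virtual projective representations $V(g_i)$ on the bond space; (ii) because $C_l$ is symmetric, pushing the total symmetry action through it reproduces the same virtual $V(g_i)$ factors on its outer virtual boundaries; and (iii) the short-range correlation conditions (\ref{SRC1})–(\ref{SRC2}), embodied in Fig.~\ref{Pre}(d), allow us to replace the MPS transfer matrices to the far left and far right of $C_l$ by trivial identity contributions up to error $e^{-\Omega(N)}$, which vanishes in the thermodynamic limit. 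Chaining these identities as in Fig.~\ref{Net}(c) leaves behind exactly the bare phase factor $V(g_1)V(g_2)$.

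The main obstacle is the non-Abelian bookkeeping in step (iii): unlike the $Z_2\times Z_2$ case where the end operators $Q_l,Q_r$ transform as simple sign characters and the phase is read off directly, for general $G$ one must carry noncommuting virtual operators $V(g_1),V(g_2)$ across the sublinear region dressed by $C_l$ while only invoking the \emph{gauge-invariant} combination $V(g_1)V(g_2)V(g_1g_2)^{-1}$ that defines $\omega(g_1,g_2)$. The key realization, however, is that the dressing by $C_l$ acts as a similarity transformation on the virtual operators and so leaves this gauge-invariant scalar untouched; no commutativity of $G$ is used. Since every cocycle invariant $\omega(g_1,g_2)$ is preserved, the equivalence class $[\omega]$ is preserved, and by the classification reviewed in Appendix~\ref{sptc} the states $|\psi\rangle$ and $C|\psi\rangle$ lie in the same SPT phase.
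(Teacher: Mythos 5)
Your proposal follows essentially the same route as the paper's proof of Theorem \ref{Gen}: reduce the SPT label to the gauge-invariant phase factors measured by the nonlocal order parameters of Refs.~\cite{HPCS12,PT12}, use causal-cone cancellation of the symmetric gates to localize $C$ at the domain walls, and push the merged symmetric unitaries through the MPS symmetry relation (\ref{sym}) together with the short-range-correlation replacement of Fig.~\ref{Pre}(d) to recover the bare local phase factor. The one point to tighten is that $\omega(g_1,g_2)=V(g_1)V(g_2)V(g_1g_2)^{-1}$ is not itself gauge invariant [see (\ref{omom})]; the genuinely gauge-invariant quantities are combinations such as $V(g_1)V(g_2)V^{-1}(g_1)V^{-1}(g_2)$ for commuting symmetry actions, which is exactly the simplification the paper acknowledges when it ``pretends'' that $V(g_1)V(g_2)$ with $U(g_1)U(g_2)=1$ is gauge invariant.
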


\begin{proof}
As gauge-invariant phase factors provide a complete description of the equivalence class of projective representations, it suffices to show that all gauge-invariant phase factors cannot change under symmetric local quantum circuits of sublinear depth. Let $V$ be the projective representation of the symmetry group $G$ that labels the SPT phase of $|\psi\rangle$. The simplest example of a gauge-invariant phase factor is $V(g_1)V(g_2)V^{-1}(g_1)V^{-1}(g_2)$ for $g_1,g_2\in G$ with $U(g_1)U(g_2)U^{-1}(g_1)U^{-1}(g_2)=1$. However, the graphical representation of the nonlocal order parameter that measures this gauge-invariant phase factor contains eight copies of $|\psi\rangle$ (see Fig. 9 in Ref. \cite{PT12}) and is cumbersome. In order to simplify the illustration of our proof, we pretend that $V(g_1)V(g_2)$  with $U(g_1)U(g_2)=1$ is a gauge-invariant phase factor so that the corresponding nonlocal order parameter contains only four copies of $|\psi\rangle$. We show that this ``gauge-invariant phase factor'' cannot change under symmetric local quantum circuits of sublinear depth. It is straightforward to generalize the proof to any gauge-invariant phase factor.

We briefly review the construction of the tensor network (nonlocal order parameter) that measures the gauge-invariant phase factor $V(g_1)V(g_2)$ (see Sec. IV B of Ref. \cite{PT12} for details). The tensor network contains three domain walls (two of which are illustrated in Fig. 9 of Ref. \cite{PT12}). As $|\psi\rangle$ is short-range correlated in the sense of (\ref{SRC1}) and (\ref{SRC2}), one can define a ``local phase factor'' for each domain wall such that the overall phase factor is the product of all three local phase factors. Specifically, the domain wall in Fig. \ref{Net}(a) (corresponding to the left domain wall in Fig. 9 of Ref. \cite{PT12}) contributes the local phase factor $V(g_1)V(g_2)$. The other two domain walls (not shown) are $\Theta(N)$ sites away; they do not contribute any nontrivial local phase factors, but are necessary for restoring periodic boundary conditions. The left-hand side of the graphical equation in Fig. \ref{Net}(a) is constructed as follows. We take four copies of $|\psi\rangle$ (expressed as MPS): two copies above and two copies below [tensors in the copies below are complex conjugated as in Fig. \ref{Pre}(b)]; contract them via a permutation to the left and via the symmetry operators $U(g_1),U(g_2)$ (red squares and circles) to the right of the domain wall. Then the local phase factor $V(g_1)V(g_2)$ pops out, as illustrated in Fig. \ref{Net}(a).

Under symmetric local quantum circuits of sublinear depth, Fig. \ref{Net}(b) shows that the local phase factor for each domain wall is still well defined and Fig. \ref{Net}(c) proves its invariance. Specifically, in Fig. \ref{Net}(c) we assume without loss of generality that $C$ is a symmetric $2$-local quantum circuit of depth $1$ so that all four rectangles [corresponding to the gates $C_l$ and $C_l^\dag$ in Fig. \ref{Net}(b)] in each tensor network are symmetric and $2$-local. The first (from above to below) rectangle acts on the third and fifth (from left to right) vertical lines; the second acts on the fourth and sixth; the third acts on the fourth and fifth; the fourth acts on the third and sixth. All other crossings between rectangles and vertical lines should not be there if we could draw the tensor networks in 3D rather than in 2D. We briefly explain each step of the graphical equation chain in Fig. \ref{Net}(c):\\
\indent Step 1: (\ref{sym}) Figure \ref{Pre}(c) and the symmetry of the rectangles.\\
\indent Step 2: (\ref{sym}) Figure \ref{Pre}(c).\\
\indent Step 3: Figure \ref{Pre}(d).\\
\indent Step 4: (\ref{sym}) Figure \ref{Pre}(c).\\
\indent Step 5: Figure \ref{Pre}(d) and the symmetry of the rectangles.\\
\indent Step 6: $U(g_1)U(g_2)=1$.
\end{proof}

\begin{remark}
The time-reversal symmetry is not an on-site symmetry as the antiunitary time-reversal operator cannot be expressed as a tensor product of on-site operators. However, it can be effectively treated as an on-site symmetry using the trick in Sec. IV B of Ref. \cite{PT12}. Therefore, we expect that the proof of Theorem \ref{Gen} can be generalized to the time-reversal symmetry.
\end{remark}

\end{document}